\newcommand{\be}{\begin{equation}}
\newcommand{\ee}{\end{equation}}
\newcommand{\bea}{\begin{eqnarray}}
\newcommand{\eea}{\end{eqnarray}}
\newcommand{\blk}{\color{black}}
\newtheorem*{rep@theorem}{\rep@title}
\newcommand{\newreptheorem}[2]{%
\newenvironment{rep#1}[1]{%
 \def\rep@title{#2 \ref{##1}}%
 \begin{rep@theorem}}%
 {\end{rep@theorem}}}
\newtheorem{theorem}{Theorem}
\begin{document}

\title{Wigner's approach enabled detection of multipartite nonlocality using all different bipartitions }

\author{Sumit Nandi}
\email{sumit.nandi@bose.res.in}
\affiliation{S. N. Bose National Centre for Basic Sciences, Block JD, Sector III, Salt Lake, Kolkata 700 106, India}
\author{Debashis Saha}
\email{saha@bose.res.in}
\affiliation{S. N. Bose National Centre for Basic Sciences, Block JD, Sector III, Salt Lake, Kolkata 700 106, India}
\affiliation{School of Physics, IISER Thiruvananthapuram, Kerala, India 695551}

\author{Dipankar Home}
\email{quantumhome80@gmail.com}
\affiliation{Center for Astroparticle Physics and Space Science (CAPSS), Bose Institute, Kolkata 700 091, India}

\author{A. S. Majumdar}
\email{archan@bose.res.in}
\affiliation{S. N. Bose National Centre for Basic Sciences, Block JD, Sector III, Salt Lake, Kolkata 700 106, India}

\begin{abstract}

Distinct from Bell's approach, Wigner had derived a form of local realist (LR) inequality which is quantum mechanically violated for a bipartite maximally entangled state. Subsequently, this approach was generalized to obtain a multipartite LR inequality. However, the violation of such generalised Wigner's inequality does not guarantee nonlocality between all possible different bipartitions of the multipartite system. In the present work, this limitation has been overcome by formulating a further generalisation of Wigner's approach through the derivation of a set of LR inequalities with respect to all different bipartitions of a N-partite system. Quantum mechanical violations of all individual LR inequalities belonging to such a set would rigorously certify multipartite nonlocality by also providing a finer characterisation of the nature of multipartite nonlocality in the following sense. The quantum mechanical violation of any given inequality of our complete set of LR inequalities would enable identification of the corresponding bipartition which exhibits nonlocality. This is in contrast to other multipartite LR inequalities such as the Svetlichny inequality or its generalisation that cannot be used to detect whether there is any particular bipartition which is nonlocally correlated. The efficacy of the scheme developed in this paper is illustrated for the tripartite and quadripartite states.
\end{abstract} 

\maketitle
\section{Introduction}\label{intro}
Discovery of Bell's inequality (BI) \cite{bell64,bell69} has given rise to extensive theoretical and experimental investigations over nearly six decades, thereby establishing nonlocality as a deep-seated fundamental feature of nature, compatible with quantum mechanical predictions (see, for example, the comprehensive review of the relevant theory works by Brunner \emph{et al} \cite{brunner}, and the most conclusive recent tests of BI \cite{nature,prl1,prl2}). Apart from its wide-ranging foundational implications, nonlocality has also been shown to be a valuable resource for accomplishing quantum information processing protocols like ensuring secret key distribution \cite{ekert}, device independent certification of randomness \cite{pironio}, and for efficient implementations of communication complexity protocols \cite{zukowski}. However, most of the studies have so far explored nonlocal features of essentially bipartite systems. Multipartite nonlocality was first analysed in detail by Svetlichny \cite{svetlichny1} using a tripartite system. 
It was pointed out that the quantum mechanical violation of a $N$-partite ($N\ge 3$) BI is not sufficient for ensuring genuine $N$-partite nonlocality. This is because such $N$-partite BI violation can be simulated by nonlocal correlations between $M$ $(<N)$ particles, where $M$ nonlocally correlated particles can vary from one run of the experiment to the other. Hence, for certifying genuine multipartite nonlocality by considering the specific case of three-particle system, Svetlichny derived an inequality from the condition of locality satisfied across all its \emph{three} possible bipartitions. It is then guaranteed that the quantum violation of such an inequality cannot be reproduced in terms of any hybrid local-nonlocal system, in which the nonlocal correlations are present only between two particles, while these two particles are locally correlated with third particle. Subsequently, Svetlichny's approach was generalised for an arbitrary $N$-qubit system \cite{svetlichny2, collins}. Testability and interesting implications of Svetlichny's criterion of multipartite nonlocality were analysed by Mitchell \emph{et al} \cite{mitc}. 
 In this connection, it was also pointed out \cite{cerd} how the maximal quantum violation of the three particle case of Mermin's form of $N$-particle BI \cite{mermin} can be reproduced by a hybrid local-nonlocal model in which the nonlocal correlations are present only between two of the parties.  Thus, Mermin's inequality and its variants \cite{ardheli} cannot be used to probe genuine multipartite nonlocality. 
%

Against the above backdrop, our present paper explores a rigorous approach for analysing multipartite nonlocality by taking cue from the line of studies stimulated by Wigner's instructive demonstration \cite {wigner} of an incompatibility between quantum mechanics and local realism (LR) for the Einstein-Podolsky-Rosen-Bohm (EPRB) scenario involving a singlet state. For this purpose, Wigner's LR inequality was derived assuming, in conformity with the locality condition, the existence of joint probability distributions (JPDs) in the hidden variable space corresponding to the occurrence of different possible combinations of the outcomes of measurements of the relevant observables. However, Wigner's approach has not been much studied, apart from applying it for entangled neutral Kaons \cite{kaon} and its extensions for an arbitrary two qubit pure state \cite{castello} as well as for two qutrit systems \cite{wigner-ineq2}. Subsequently, a noteworthy work has been the generalisation of Wigner's approach to obtain suitable inequalities for probing nonlocality in the multipartite scenario \cite{wigner-ineq1}. Nevertheless, such generalised Wigner's inequality is unable to characterise nonlocality in a sense mentioned previously \textit{i.e.} the locality condition assumed while deriving such inequality can be violated by a $N$-partite state in which $M$ $(<N)$ particles are nonlocally correlated.\\~\\
%
%
%
%
In this paper we consider nonlocality in the following sense: If all the $(2^{N-1} -1)$ different Wigner's local realist (WLR) inequalities corresponding to all possible $(2^{N-1} -1)$ different bipartitions are individually violated by quantum mechanics, then such a state is rigorously certified to be nonlocal, meaning that any one of the subsystems is nonlocally correlated with all other subsystems. Thus, what is essentially required for such certification of nonlocality for any given state is that the individual quantum violations of all these inequalities need to be demonstrated separately for independent sets of experimental runs, irrespective of whether the measurement settings are the same or different for these separate sets of experimental runs. In our study, the measurement settings for which the individual quantum violations of our formulated WLR inequalities occur are, in general, different for different such inequalities. Now, in order to obtain these desired WLR inequalities, we proceed by assuming the existence of joint probability  distributions satisfying the locality condition across all possible distinct bipartitions of the multipartite system, where each bipartition consists of $n$ and $N-n$ particles respectively. A key feature of this scheme is that the violation of any such individual inequality would signify nonlocality for the specific bipartition corresponding to the inequality which is considered, irrespective of whether other inequalities of the set are individually violated.\\~\\

The plan of the paper is as follows. We begin by recapitulating the essence of Wigner's original derivation for a singlet state and its subsequent extension for an arbitrary multipartite state (Sec. \ref{wig_ineq}). Next, in order to explain our formulation of the further generalisation of Wigner's approach for detecting multipartite nonlocality, we proceed by first illustrating the basic idea of our scheme for tripartite states (Sec. \ref{tri_genuine}). This is followed by the formulation of a complete set of WLR inequalities whose individual quantum violations would show nonlocality for an arbitrary multipartite state across all of its different bipartitions (Sec. \ref{multi_genuine}). Then in (Sec.\ref{sec5}, the efficacy of our scheme for detecting nonlocality of different tripartite and quadripartite states, as compared to Svetlichny's inequality, is discussed. Furthermore, in (Sec. \ref{sec5}), we explain the way the individual quantum violation of any such inequality would enable identification of the corresponding bipartition for which nonlocality holds good. In conclusion (Sec. \ref{conclusion}), we summarize the salient features of our work and indicate a few directions for future study. 

\section{Background: Wigner's inequality and its generalisation in the multipartite scenario}\label{wig_ineq}

Let us consider the EPRB scenario involving two spin-$\frac{1}{2}$ particles prepared in a singlet state and shared by two spatially separated observers (Alice and Bob) who make measurements of the dichotomic observables $x_i^s$ where $i\in\{0,1\}$ and $s\in{A,B}$. The joint probabilities of the outcomes ($a=\pm ,b=\pm$) of such measurements are denoted by $P(x_k^A\pm,x_l^B\pm)$. The central ingredient of Wigner's derivation is the following: By considering an underlying stochastic hidden variable (HV) distribution corresponding to a quantum state, one assumes the existence of overall joint probabilities in the HV space for the individual outcomes of measurements of the relevant observables, from which the measured marginal probabilities are obtained by integration over the HV distribution. As a consequence, the locality condition is ensured because the single probability of the occurrence of an individual measurement outcome for any one of the particles (obtained as a marginal of the assumed overall joint probability distributions) is fixed, irrespective of what measurement is performed on the other particle. Thus, expressing the joint probabilities as marginals of the overall joint probability distributions and integrating over the relevant distribution in the HV space, one can obtain the original form of Wigner's inequality in terms of the observed joint probabilities:
\begin{eqnarray}\label{wig_original}
P(x_0^A+,x_0^B+)\le P(x_0^A+,x_1^B+)+P(x_1^A+,x_0^B+)\nonumber\\
\end{eqnarray}
which is quantum mechanically violated for a singlet state. Motivated by the above derivation, Wigner's inequality can immediately be obtained which is quantum mechanically violated for any two qubit entangled pure state. To this end, we proceed as follows: First, the overall joint probability pertaining to a particular hidden variable $\lambda$ is written as

\begin{eqnarray}\label{eq4}
P_\lambda(x_0^A+,x_0^B+)&=&\sum_{x_1^A,x_1^B}
P_\lambda(x_0^A+,
x_1^A;x_0^B+,
x_1^B)\nonumber\\
&=&P_\lambda(x_0^A+,
x_1^A+;x_0^B+,
x_1^B+)\nonumber\\&&+P_\lambda(x_0^A+,
x_1^A+;x_0^B+,
x_1^B-)\nonumber\\&&+P_\lambda(x_0^A+,
x_1^A-;x_0^B+,
x_1^B+)\nonumber\\&&+P_\lambda(x_0^A+,
x_1^A-;x_0^B+,
x_1^B-)\nonumber\\
\end{eqnarray}

Similar expressions follow for such $2^4$ joint probabilities corresponding to all measurement settings and outcomes considered in this scenario. Studying all these expressions one can obtain following relation such as    
\begin{eqnarray}
P_\lambda(x_0^A+,x_1^B+)+P_\lambda(x_1^A+,x_0^B+)\nonumber\\
+P_\lambda(x_1^A-,x_1^B-)
=P_\lambda(x_0^A+,x_0^B+)+\epsilon\nonumber\\
\end{eqnarray}
where $\epsilon\ge0$. 
After integrating over the HV space and assuming non-negativity of the overall JPDs, it immediately follows 
\begin{eqnarray}
P(x_0^A+,x_0^B+)-P(x_0^A+,x_1^B+)-\nonumber\\
P(x_1^A+,x_0^B+)
-P(x_1^A-,x_1^B-)\le 0
\end{eqnarray}
 whose maximal quantum mechanical violation occurs for a maximally entangled state.\\~\\
 
Proceeding similarly as discussed above, the generalised Wigner inequality (GWI) for the multipartite case was derived \cite{wigner-ineq1} which can be written in the following form:
\begin{eqnarray}\label{GWI_multi}
P(x_0^1+,x_0^2+,\dots ,x_{0}^N+)- P(x_1^1+,x_0^2+, \dots, x_{0}^N+)-\nonumber\\P(x_0^1+,x_1^2+, \dots, x_{0}^N+)-. . . -
 P(x_1^1-,x_1^2- ,\dots ,x_{1}^N-)\nonumber\\ \le0\nonumber\\
\end{eqnarray}
where measurement settings of the dichotomic observables deployed by the $s^{th}$ observer are $x_i^s$ where $s=1,\dots N$, $i\in\{0,1\}$ and the outcomes are $\{+,-\}$ respectively. For the special case of the tripartite system, we have 
\begin{eqnarray}\label{GWI}
P(x_0^1+,x_0^2+,x_0^3+)-P(x_1^1+,x_0^2+ ,x_0^3+)-\nonumber\\  P(x_0^1+,x_1^2+ ,x_0^3+)-P(x_0^1+,x_0^2+,x_1^3+)-\nonumber\\
 P(x_1^1-,x_1^2-,x_1^3-) \le0\nonumber\\
\end{eqnarray}
However, violation of the above inequality does not necessarily imply genuine nonlocality. Let us now consider the bi-separable state $\ket{0}\otimes\frac{1}{\sqrt2}(\ket{00}+\ket{11})$ where Alice has the state $\ket{0}$ while Bob and Charlie share the triplet state. It is then easily seen that the observable joint probabilities for suitable measurement settings violate the LR inequality given by Eq.(\ref{GWI}), although the correlation between measurements by Alice and Bob-Charlie  can be explained by a LR model. On the other hand, for nonlocality to be implied by the violation of a suitable form of GWI, it is necessary to ensure that all the subsystems are nonlocally correlated. Next, in order to derive such required form of GWI, we proceed as follows, by first considering the tripartite case, followed by suitable generalisation for an arbitrary $N$-partite state.
%

\section{Nonlocality using Wigner's approach for tripartite states}\label{tri_genuine}
Here for the tripartite system comprising the subsystems $A$, $B$ and $C$, we consider all possible bipartitions denoted by $A|BC$, $B|AC$ and $C|AB$ respectively. Two measurement settings per party, \emph{i.e.}, for $A$, $B$, $C$ respectively, are denoted by $x^A_i,x^B_i,x^C_i \hspace{.05 in}\text{where}\hspace{.05 in} i\in \{0,1\}$, and each measurement by an individual party is taken to yield two outcomes $a,b,c\in\{+,-\}$. First, applying the locality condition across the particular $A|BC$ cut, the existence of a joint probability distribution for a HV ($\lambda$) is assumed for such a bipartition so that 
the measured joint probability distribution is given by
\begin{equation}\label{eq:jpd}
P(abc|x^Ax^Bx^C)=\sum_{\lambda}q_{\lambda} \ P_{\lambda}(a|x^A)P_{\lambda}(bc|x^Bx^C), \ \sum_\lambda q_\lambda = 1,
\end{equation} 

 For instance, the observed probability of getting $(+,+,+)$ for $x^A_0,x^B_0,x^C_0$ is obtained as follows 
\be 
P(x^A_0+,x^B_0+,x^C_0+) =  \sum_\lambda q_\lambda \ P_\lambda(x^A_0+,x^B_0+,x^C_0+)
\ee 
where $\sum_\lambda q_\lambda = 1$ and
\be \label{eq:op1}
P_\lambda(x^A_0+,x^B_0+,x^C_0+) =  \sum P_\lambda (+,x^A_1,++,x^B_0x^C_1,x^B_1x^C_0,x^B_1x^C_1)
\ee
where the summation is taken over $x^A_1=\pm$ and $x^B_ix^C_j=++,+-,-+,--$ while the outcomes corresponding to the measurement settings $x_0^A$, $x_0^Bx_0^C$ are fixed given by $+$, $++$ respectively. Similarly, other observed joint probabilities can be obtained by using Eq.\eqref {eq:jpd}. \\

 In general, if the joint probability cannot be reproduced by any convex mixtures of HV models that are local across all possible bipartitions $A|BC, B|AC$ and $C|AC$, 
then the correlations is said to be genuinely nonlocal. However, here we consider a weaker version of the genuine nonlocality, in which the joint probability cannot be reproduced by local HV models across all bipartitions separately, that is, 
\begin{equation}\label{weakgn}
    P(abc|x^Ax^Bx^C) \neq 
    \begin{cases}
    \sum_{\lambda}q^1_{\lambda} \ P_{\lambda}(a|x^A)P_{\lambda}(bc|x^Bx^C) \\
    \sum_{\lambda}q^2_{\lambda} \ P_{\lambda}(b|x^B)P_{\lambda}(ac|x^Ax^C) \\
    \sum_{\lambda}q^3_{\lambda} \ P_{\lambda}(c|x^C)P_{\lambda}(ab|x^Ax^B),
    \end{cases}
\end{equation}
where $\sum_\lambda q^1_\lambda = \sum_\lambda q^2_\lambda = \sum_\lambda q^3_\lambda = 1$.
Now, we derive the theorem which provides the basis for our analysis of the notion of nonlocality according to \eqref{weakgn} for the tripartite case.

\begin{theorem}\label{thm1}
Nonlocality occurs if the following all three generalised WLR inequalities are quantum mechanically violated corresponding to the bipartitions $A|BC$, $B|AC$ and $C|AB$ respectively:
\begin{eqnarray}\label{I1-a-bc}
I_{A|BC} &=&
P(x^A_0+,x^B_0+,x^C_0+)-P(x^A_1+,x^B_0+,x^C_0+)\nonumber\\
&& -P(x^A_1-,x^B_1+,x^C_0+)-P(x^A_0+,x^B_1-,x^C_0+) \nonumber\\
&& -P(x^A_1-,x^B_1+,x^C_0-) - P(x^A_0+,x^B_1-,x^C_0-) \leqslant 0,\nonumber\\  
\\
I_{B|AC} &=&
P(x^A_0+,x^B_0+,x^C_0+)-P(x^A_0+,x^B_1+,x^C_0+)\nonumber\\
&& -P(x^A_1+,x^B_1-,x^C_0+)-P(x^A_1-,x^B_0+,x^C_0+) \nonumber\\
&& -P(x^A_1+,x^B_1-,x^C_0-) - P(x^A_1-,x^B_0+,x^C_0-) \leqslant 0 . \label{I1-b-ac}\nonumber \\
\text{and}\\
I_{C|AB} &=&
P(x^A_0+,x^B_0+,x^C_0+)-P(x^A_0+,x^B_0+,x^C_1+)\nonumber\\
&& -P(x^A_0+,x^B_1+,x^C_1-)-P(x^A_0+,x^B_1-,x^C_0+) \nonumber\\
&& -P(x^A_0-,x^B_1+,x^C_1-) - P(x^A_0-,x^B_1-,x^C_0+) \leqslant 0 . \label{I1-c-ab}\nonumber \\
\end{eqnarray}
\end{theorem}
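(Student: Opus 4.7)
The plan is to prove $I_{A|BC}\le 0$ under the $A|BC$ local-realist decomposition \eqref{eq:jpd}; the other two inequalities are obtained from $I_{A|BC}$ by the party permutations $A\leftrightarrow B$ and $A\leftrightarrow C$ respectively, so their proofs are identical up to relabelling.

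The first step is to collapse the six-term form into a four-term form. The third and fifth subtracted probabilities differ only in the outcome of $x_0^C$, and summing them gives the bipartite marginal $P(x_1^A-,x_1^B+)$; coalescing the fourth and sixth terms similarly gives $P(x_0^A+,x_1^B-)$:
\begin{align*}
P(x_1^A-,x_1^B+,x_0^C+)+P(x_1^A-,x_1^B+,x_0^C-) &= P(x_1^A-,x_1^B+),\\
P(x_0^A+,x_1^B-,x_0^C+)+P(x_0^A+,x_1^B-,x_0^C-) &= P(x_0^A+,x_1^B-).
\end{align*}
The claim thus reduces to $P(x_0^A+,x_0^B+,x_0^C+) - P(x_1^A+,x_0^B+,x_0^C+) \le P(x_1^A-,x_1^B+) + P(x_0^A+,x_1^B-)$, and by convex linearity in $\lambda$ it suffices to prove the analogous bound for each $\lambda$.

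For fixed $\lambda$ the Wigner hypothesis supplies a joint distribution $p(a_0a_1):=P_\lambda^A(x_0^A=a_0,x_1^A=a_1)$ on A's two-setting outcomes and a joint distribution on BC's four-setting outcomes; the $A|BC$ factorisation then writes every $P_\lambda$ as a product of marginals of the two. Using $P_\lambda^A(x_0^A+)-P_\lambda^A(x_1^A+)=p(+-)-p(-+)$,
\begin{equation*}
P_\lambda(x_0^A+,x_0^B+,x_0^C+)-P_\lambda(x_1^A+,x_0^B+,x_0^C+) = [p(+-)-p(-+)]\,P_\lambda^{BC}(x_0^B+,x_0^C+) \le p(+-),
\end{equation*}
since $p(-+)\ge 0$ and the BC-marginal is at most $1$. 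Conversely, $P_\lambda^A(x_1^A-)\ge p(+-)$ and $P_\lambda^A(x_0^A+)\ge p(+-)$ combined with $P_\lambda^B(x_1^B+)+P_\lambda^B(x_1^B-)=1$ give
\begin{equation*}
P_\lambda(x_1^A-,x_1^B+)+P_\lambda(x_0^A+,x_1^B-) \ge p(+-).
\end{equation*}
Chaining these at the common pivot $p(+-)$ closes the per-$\lambda$ argument; averaging with weights $q_\lambda$ and re-expanding the bipartite marginals as in the first step yields $I_{A|BC}\le 0$.

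The main obstacle is conceptual rather than computational: in its displayed six-term form $I_{A|BC}$ does not obviously resemble a Wigner-type inequality, and recognising that the last four tripartite probabilities coalesce into two A,B bipartite marginals is what exposes the underlying four-term structure. Once that is seen, the BC subsystem contributes only through its B-marginal and the argument mirrors the bipartite Wigner inequality, with $p(+-)=P_\lambda^A(x_0^A=+,x_1^A=-)$ as the natural pivot that simultaneously upper-bounds the A-difference on the left and lower-bounds the pair of B-marginals on the right.
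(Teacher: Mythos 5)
Your proof is correct, but it follows a genuinely different route from the paper's. The paper (Appendix A) never collapses the $C$ outcomes: it expands each of the six tripartite probabilities as a marginal of the overall Wigner JPD $P_\lambda(x_0^A,x_1^A;x_0^Bx_0^C,\dots,x_1^Bx_1^C)$, isolates in each expansion the piece with $x_0^A=+$, $x_0^Bx_0^C=++$ (or $x_1^Bx_0^C$ fixed) plus a non-negative remainder $\delta_1,\dots,\delta_5$, and shows that the five subtracted terms sum exactly to $P_\lambda(x_0^A+,x_0^B+,x_0^C+)+\sum_l\delta_l$, after which integration over $\lambda$ gives Eq.~(\ref{I1-a-bc}); the other two inequalities follow by the same A$\leftrightarrow$B and A$\leftrightarrow$C relabelling you invoke. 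You instead coalesce the third+fifth and fourth+sixth terms into the A--B marginals $P(x_1^A-,x_1^B+)$ and $P(x_0^A+,x_1^B-)$ (legitimate, since both members of each pair carry the same setting $x_0^C$), and then run a per-$\lambda$ pivot argument with $p(+-)=P_\lambda(x_0^A{=}+,x_1^A{=}-)$: the A-difference times the BC probability is bounded above by $p(+-)$, and the two A--B marginals are bounded below by $p(+-)$ using $P_\lambda^A(x_1^A-)\ge p(+-)$, $P_\lambda^A(x_0^A+)\ge p(+-)$ and the normalisation of B's marginal. This is sound: the coupling $p(a_0a_1)$ always exists for a single party under the factorised model Eq.~(\ref{eq:jpd}) (the product coupling suffices), so your hypothesis is equivalent to the paper's JPD assumption, and the four-setting BC joint distribution you mention is never actually needed. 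What your route buys is brevity and transparency --- it exposes the bipartite Wigner skeleton hiding inside the six-term inequality, with only B's marginal playing a role on the right-hand side; what the paper's heavier bookkeeping buys is that it is exactly the template that generalises, without any coalescing step, to the $n|N-n$ inequality of Theorem~\ref{thm2} in Appendix~\ref{appendix2}.
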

The proof of this theorem is given in Appendix (\ref{appendix1}).
The individual violations of the inequalities Eqs. (\ref{I1-a-bc})-(\ref{I1-c-ab}) signify tripartite nonlocality rigorously. For example, consider the state $\ket{0}\otimes\frac{1}{\sqrt 2}(\ket{00}+\ket{11})$ for the bipartition $A|BC$ where one subsystem A is locally correlated with two other subsystems B and C who are nonlocally correlated among themselves. Such a state satisfies the inequality Eq.(\ref{I1-a-bc}), but may violate the other inequalities Eqs.(\ref{I1-b-ac}), (\ref{I1-c-ab}). Hence it is necessary that all the inequalities Eqs.(\ref{I1-a-bc})-(\ref{I1-c-ab}) are violated to ensure that any one subsystem is nonlocally correlated with all other subsystems of the tripartite system. \\~\\ 

Next, we briefly discuss a few illustrative usages of the generalised WLR inequalities. For this purpose, we consider the well known $\ket{GHZ}$ and $\ket{W}$ states written in the computational basis
\begin{eqnarray}
\ket{GHZ}=\frac{1}{\sqrt2}(\ket{000}+\ket{111})\label{3-ghz}\\
\ket{W}=\frac{1}{\sqrt3}(\ket{001}+\ket{010}+\ket{100})\label{3-w}
\end{eqnarray}

In order to obtain the maximum violations of Eqs. (\ref{I1-a-bc}), (\ref{I1-b-ac}), and (\ref{I1-c-ab}) for the above mentioned states, consider the following settings parametrised by $\alpha^j_i$ of the $j^{th}$ observer ($j$=1,2,3) measuring the observables $x_i$ (i=0,1)
\begin{eqnarray}\label{settings}
\ket{m^j_{x^j_i+}}=\cos\alpha^j_i\ket{0}+\sin\alpha^j_i\ket{1}\nonumber\\
\ket{m^j_{x^j_i-}}=-\sin\alpha^j_i\ket{0}+\cos\alpha^j_i\ket{1}
\end{eqnarray}
 
For the $W$-state, we obtain the maximum quantum violations of the inequalities Eqs.(\ref{I1-a-bc})-(\ref{I1-c-ab}) to be the same given by 0.101
in all the three distinct bipartitions. We specify the local measurement parameters for which the maximum violation of the LR inequality $I_{A|BC}$ is obtained are $\alpha_0^1\simeq 1.27$, $\alpha_1^1\simeq 0.29$, $\alpha_0^2\simeq 0$, $\alpha_1^2\simeq \frac{\pi}{4} $, $\alpha_0^3\simeq 0$ and $\alpha_1^3\simeq 1.45$ (in radian), see Table \ref{table1}. Similarly the corresponding values of the parameters for maximum violation of $I_{B|AC}$ and $I_{C|AB}$ are  $\alpha_0^1\simeq 0$, $\alpha_1^1\simeq \frac{\pi}{4}$, $\alpha_0^2\simeq 1.27$, $\alpha_1^2\simeq 0.29$, $\alpha_0^3\simeq 0$ and $\alpha_1^3\simeq 0.45$ and $\alpha_0^1\simeq \pi$, $\alpha_1^1\simeq 1.14$, $\alpha_0^2\simeq \pi$, $\alpha_1^2\simeq \frac{\pi}{4} $, $\alpha_0^3\simeq 1.27$ and $\alpha_1^3\simeq 0.29$ (in radian) respectively. Using the given measurement settings Eq.(\ref{settings}), the maximum quantum violation for GHZ state is found to be significantly smaller, given by $0.06$ for all the three distinct bipartitions. The parameters for which maximum violations occur for the bipartitions $I_{A|BC}$, $I_{B|AC}$ and $I_{C|AB}$, given in the same order as before, are $\big( 1.33, 1.80, 2.34, \frac{\pi}{2}, 2.35, .016\big)$, $\big(0.78, \frac{\pi}{2}, 1.80, 1.33, 2.35, 0.92\big)$ and $\big(2.35, 0.17, 2.35, \frac{\pi}{2}, 1.33, 1.80\big)$ (in radian) respectively, see Table \ref{table2}. Thus, we observe that maximum violations of the LR inequalities happen for different measurement settings for each of the inequalities. \\~\\

\begin{table}
\begin{center}
\begin{tabular}{ |p{0.8cm}|p{0.8cm}|p{0.8cm}|p{0.8cm}|p{0.8cm}|p{0.8cm}|p{0.8cm}|p{0.8cm} |}
\hline
\hline
$W$ & $\alpha_0^1$&$\alpha_1^1$&$\alpha_0^2$&$\alpha_1^2$&
$\alpha_0^3$&$\alpha_1^3$ & QV   \\ 
\hline 
$I_{A|BC}$ & 1.27&0.29&0&$\frac{\pi}{4}$&
0&1.45&0.101    \\ 
\hline 
$I_{B|AC}$& 0&$\frac{\pi}{4}$&1.27&0.29&
0&0.45&0.101  \\  
\hline 
$I_{C|AB}$ & $\pi$&1.14&$\pi$&$\frac{\pi}{4}$&1.27&0.29&0.101    \\ 
\hline
\hline
\end{tabular}
\caption{ Specification of the local measurement parameters
for which the maximum quantum violations (QV) of the generalised WLR inequalities Eqs. (\ref{I1-a-bc})-(\ref{I1-c-ab}) have been obtained for the W-state. }
\label{table1}
\end{center}
\end{table}

\begin{table}
\begin{center}
\begin{tabular}{ |p{0.9cm}|p{0.8cm}|p{0.8cm}|p{0.8cm}|p{0.8cm}|p{0.8cm}|p{0.8cm}|p{0.8cm} |}
\hline
\hline
$GHZ$ & $\alpha_0^1$&$\alpha_1^1$&$\alpha_0^2$&$\alpha_1^2$&
$\alpha_0^3$&$\alpha_1^3$&QV    \\ 
\hline 
$I_{A|BC}$ & 1.33&1.80&2.34 &$\frac{\pi}{2}$&2.35&0.016&0.06\\ 
\hline 
$I_{B|AC}$& $\frac{\pi}{4}$&$\frac{\pi}{2}$&1.80&1.33&2.35&0.92&0.06  \\  
\hline 
$I_{C|AB}$ & 2.35&0.17&2.35&$\frac{\pi}{2}$&1.33&1.80 &0.06 \\ 
\hline
\hline
\end{tabular}
\caption{ Specification of the local measurement parameters
for which the maximum quantum violations (QV) of the generalised WLR inequalities Eqs. (11)-(13) have been obtained for the GHZ-state. }
\label{table2}
\end{center}
\end{table}

Next, in order to investigate the tolerance to white noise of the optimal quantum violations of the above inequalities, we consider a tripartite mixed state given by
\begin{equation}
\rho=p|\psi\rangle\langle\psi|+\frac{1-p}{8}\mathbb{I}
\end{equation}

where $|\psi\rangle\langle\psi|$ corresponds to the tripartite pure state $\ket{\psi}$, $p$ is the visibility parameter, and $(1 - p)$
denotes the amount of white noise present in the state $\rho$. For $p=0$, $\rho$ denotes the maximally mixed state. The
minimum value of $p$ for which the mixed state $\rho$ violates a given local realist inequality is known as the threshold visibility pertaining to the state $\ket{\psi}$
corresponding to the considered inequality. In this case, such threshold visibilities for the $\ket{GHZ}$ and $\ket{W}$ states turn out to be $0.894$ and $0.832$ respectively. \\~\\
%
 
Next, in the subsequent section we will formulate the set of generalised WLR inequalities for detecting nonlocality for an arbitrary $N$-qubit state.

\section{Nonlocality using Wigner's approach for any multipartite state} \label{multi_genuine}
Generalising the scheme discussed in the preceding section, in this case we consider all possible bipartitions for obtaining the desired generalised WLR inequalities. For a given $N$-qubit state distributed amongst $N$ spatially separated observers, we need to construct $2^{N-1}-1$ 
\cite{Demianowicz} different LR inequalities whose individual quantum violations would certify nonlocality rigorously. \\~\\
To this end, we consider a typical bipartition of the $N$-partite system between $n$ and $N-n$ parties where $r_1, \dots, r_n$ denote the parties in one of the partitions and $s_1,\dots, s_{N-n}$ denote the parties in the other partition. For the $r_i^{th}$ party, the two measurement settings and the two outcomes per setting are denoted by $x^{r_i}_0,x^{r_i}_1$ and $a^{r_i}\in\{+,-\}$ respectively. Similarly, for the $s_i^{th}$ party in the other partition, the two measurement settings and the two outcomes per setting are denoted by $x^{s_i}_0,x^{s_i}_1$ and $a^{s_i}\in\{+,-\}$  .

%

Now, following Wigner's approach,
in conformity with the locality assumption being satisfied across the $n|N-n$ cut, we assume the existence of the joint probability distribution for a HV ($\lambda$) from which the observable joint probabilities can be obtained for the bipartition $r_1\cdots r_n|s_1\cdots s_{N-n}$. Thus, for instance, the joint probability of obtaining the outcomes $(+,+,\cdots,+)$ for the measurements of the observable $x_0$ performed on all $N$ subsystems respectively is given by
\begin{widetext}
\begin{equation}
P(x_0^{r_1}+,x_0^{r_2}+,\dots,x_0^{r_n}+,x_0^{s_1}+,x_0^{s_2}+,\dots,x_0^{s_{N-n}}+)=\sum_\lambda q_\lambda P_\lambda(x_0^{r_1}+,x_0^{r_2}+,\dots,x_0^{r_n}+,x_0^{s_1}+,x_0^{s_2}+,\dots,x_0^{s_{N-n}}+)
\end{equation}
where $\sum_\lambda q_\lambda=1$ and 
\begin{eqnarray}\label{jpd2}
&& P_\lambda(x_0^{r_1}+,x_0^{r_2}+,\dots,x_0^{r_n}+,x_0^{s_1}+,x_0^{s_2}+,\dots,x_0^{s_{N-n}}+)\nonumber\\ &=& \sum P_\lambda(++\cdots +, \ 
x_1^{r_1}x_0^{r_2}\cdots  x_0^{r_n},\ x_0^{r_1}x_1^{r_2}\cdots x_0^{r_n}, \dots, \ x_1^{r_1}x_1^{r_2}\cdots x_1^{r_n},++\dots +,\ x_1^{s_1}x_0^{s_2}\cdots x_0^{s_{N-n}},\dots,\  x_1^{s_1}x_1^{s_2}\cdots x_1^{s_{N-n}})\nonumber\\
\end{eqnarray}
The above summation over probability distributions in the HV space is taken over all possible combinations of outcomes of the relevant observables like $x_1^{r_1}x_0^{r_2}\cdots x_0^{r_n}$, $x_0^{r_1}x_1^{r_2}\cdots x_0^{r_n},\dots$, $x_1^{r_1}x_1^{r_2}\cdots x_1^{r_n}$ appearing in the different joint distributions occurring on the RHS of the above equation. The outcomes corresponding to the measurement settings $x_0^{r_1}\cdots x_0^{r_n}$, $x_0^{s_1}\cdots x_0^{s_{N-n}}$ are fixed, given by $+\cdots+$, $+\cdots+$ respectively.
\end{widetext}


Next, in order to detect nonlocality of a $N$-partite state by considering all possible $2^{N-1}-1$ bipartitions, we have to obtain the required $2^{N-1}-1$ inequalities by invoking the locality condition for each such bipartition. The way these inequalities can be formulated is illustrated by deriving a typical such generalised WLR inequality for the most general bipartition $r_1\cdots r_n|s_1\cdots s_{N-n}$. Subsequently, one can readily obtain the desired complete set of generalised WLR inequalities by putting $n=1,2,\dots,\lfloor\frac{N}{2}\rfloor$, where $\lfloor   \rfloor$ denotes the greatest integer function for a given value of $\frac{N}{2}$.   
Now, let us proceed to the following theorem:
\begin{widetext}
\begin{theorem}\label{thm2}
The following generalised WLR inequality, denoted by $I_{n|N-n}$, is derived assuming the locality condition across a typical bipartition $r_1\cdots r_n|s_1\cdots s_{N-n}$:
\bea
I_{n|N-n} &=& P(x_0^{r_1}+,x_0^{r_2}+, \dots, x_0^{r_n}+, x_0^{s_1}+,x_0^{s_2}+, \dots, x_0^{s_{N-n}}+) - P(x_1^{r_1}+,x_0^{r_2}+, \dots, x_0^{r_n}+, x_0^{s_1}+,x_0^{s_2}+, \dots, x_0^{s_{N-n}}+) \nonumber \\
&& - \sum_{\substack{x_1^{r_1},x_0^{r_2}, \dots, x_0^{r_n}\\ \neq (+,+,\dots,+)}} P(x_1^{r_1},x_0^{r_2}, \dots, x_0^{r_n}, x_1^{s_1}+,x_0^{s_2}+, \dots, x_0^{s_{N-n}}+) \nonumber  \\
&& - \sum_{\substack{x_1^{s_1},x_0^{s_2}, \dots, x_0^{s_{N-n}}\\ \neq (+,+,\dots,+)}} P(x_0^{r_1}+,x_0^{r_2}+, \dots, x_0^{r_n}+, x_1^{s_1},x_0^{s_2}, \dots, x_0^{s_{N-n}}) \le 0 
\label{eq:generalised-ineq}
\eea 
\end{theorem}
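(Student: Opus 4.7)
The plan is to carry out, for the generic bipartition $r_1\cdots r_n\mid s_1\cdots s_{N-n}$, the same pointwise-in-$\lambda$ argument that underlies Theorem~\ref{thm1}, reorganized so the combinatorics on the two halves of the cut decouple. I would begin from Wigner's hidden-variable assumption: for each $\lambda$ there exists a joint distribution $P_\lambda$ over the $2N$ binary outcomes $a_i^{r_k}, a_i^{s_k}$ with $i\in\{0,1\}$, and the locality condition imposed across the cut factorizes it as $P_\lambda = p_\lambda^r\cdot p_\lambda^s$, with $p_\lambda^r$ supported on $\{+,-\}^{2n}$ and $p_\lambda^s$ on $\{+,-\}^{2(N-n)}$. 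Every observed joint probability appearing in $I_{n|N-n}$ is then a product of a marginal of $p_\lambda^r$ and a marginal of $p_\lambda^s$, obtained as in Eq.~(\ref{jpd2}) by fixing the displayed outcomes and summing over the rest.

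The next step is to compress all marginals in $I_{n|N-n}$ into four $\lambda$-dependent numbers: $A := p_\lambda^r(x_0^{r_i}\!=\!+\ \forall i)$, $A' := p_\lambda^s(x_0^{s_j}\!=\!+\ \forall j)$, $B := p_\lambda^r(x_1^{r_1}\!=\!+, x_0^{r_i}\!=\!+\ \forall i\geq 2)$, and $C := p_\lambda^s(x_1^{s_1}\!=\!+, x_0^{s_j}\!=\!+\ \forall j\geq 2)$. Each of the $2^n-1$ terms in the first sum of $I_{n|N-n}$ reads $p_\lambda^r(x_1^{r_1}\!=\!\epsilon_1,x_0^{r_2}\!=\!\epsilon_2,\dots,x_0^{r_n}\!=\!\epsilon_n)\cdot C$ for some $(\epsilon_1,\dots,\epsilon_n)\neq(+,\dots,+)$; normalization of the marginal of $p_\lambda^r$ on the $n$ observables $(x_1^{r_1},x_0^{r_2},\dots,x_0^{r_n})$ lets the sum collapse to $(1-B)C$. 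By the mirror manipulation on the $s$-side the second sum collapses to $A(1-C)$, while the remaining subtracted term contributes $BA'$ and the anchor equals $AA'$.

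The proof thus reduces to the pointwise algebraic inequality $AA' \leq BA' + (1-B)C + A(1-C)$, equivalently $f(A,A',B,C) := A(1-A') + A'B + C(1-A-B) \geq 0$ for all $A,A',B,C\in[0,1]$. This is the real obstacle: the third summand can be negative when $A+B>1$, which is not ruled out since $A$ and $B$ are marginals of a common distribution over overlapping events (both constrain $a_0^{r_i}=+$ for $i\geq 2$ but disagree on $a_0^{r_1}$ versus $a_1^{r_1}$). I would handle it by noting that $f$ is linear in $C$ and checking both endpoints: at $C=0$ the two remaining summands are separately non-negative, while at $C=1$ the expression becomes $1-B+A'(B-A)$, which is manifestly non-negative when $B\geq A$ and, when $B<A$, satisfies $1-B-A'(A-B)\geq 1-B-(A-B)=1-A\geq 0$ since $A'\leq 1$. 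Hence $f\geq 0$ everywhere on $[0,1]^4$.

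Multiplying the pointwise inequality by $q_\lambda$ and summing over $\lambda$ then delivers $I_{n|N-n}\leq 0$ on the observed joint probabilities. The bulk of the work lies not in the hidden-variable expansion, which is essentially the bookkeeping already illustrated for the bipartite case in Eq.~(\ref{eq4}) and for the tripartite case in Appendix~\ref{appendix1}, but in the four-variable inequality above; a naive attempt to have each subtracted group dominate a disjoint part of the anchor fails precisely in the regime $A+B>1$, so exploiting the linearity of $f$ in $C$ is what lets the argument go through uniformly in $n$ and $N$.
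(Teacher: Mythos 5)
Your proposal is correct in substance and arrives at Eq.~(\ref{eq:generalised-ineq}) by a genuinely different route from the paper. The paper's Appendix~\ref{appendix2} never factorizes the hidden-variable distribution: for each $\lambda$ it keeps a single joint distribution over the outcome strings of all joint-setting contexts of the two groups (Eq.~(\ref{jpd2})), expands each probability in $I_{n|N-n}$ as a marginal of that one object (Eqs.~(\ref{GIt2})--(\ref{GIt1})), and checks term by term that the subtracted probabilities reproduce the anchor term up to explicitly non-negative remainders $\tilde{\delta}_1+\tilde{\delta}_2+\tilde{\delta}_3$. You instead compress everything, pointwise in $\lambda$, into four numbers $A,A',B,C\in[0,1]$ and reduce the claim to the elementary inequality $AA'\le BA'+(1-B)C+A(1-C)$, settled by linearity in $C$ and the two endpoint checks; this is shorter, uniform in $n$ and $N$ by construction, and makes transparent that no constraint linking $A$ and $B$ (or $A'$ and $C$) beyond membership in $[0,1]$ is ever needed, which is precisely where a naive term-by-term domination argument would fail.

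One caveat should be fixed in the write-up: the hidden-variable model you start from is stronger than the theorem's hypothesis. Locality across the $n|N-n$ cut only provides, for each $\lambda$, response functions $P_\lambda(\cdot\,|\,x^{r_1}\cdots x^{r_n})$ and $P_\lambda(\cdot\,|\,x^{s_1}\cdots x^{s_{N-n}})$ for each joint setting of each group; it does not provide a joint distribution $p_\lambda^{s}$ on $\{+,-\}^{2(N-n)}$ over the individual observables of the $s$-group (that group may be internally nonlocal, which is exactly what the paper leaves unconstrained by treating each group's setting combinations as indivisible contexts in Eq.~(\ref{jpd2})). If your derivation genuinely required such within-group joint distributions, a violation of $I_{n|N-n}$ would no longer certify nonlocality across that particular bipartition, defeating the purpose of the theorem. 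Fortunately your argument does not use them: it suffices to define $A=P_\lambda(+\cdots+\,|\,x_0^{r_1}x_0^{r_2}\cdots x_0^{r_n})$, $B=P_\lambda(+\cdots+\,|\,x_1^{r_1}x_0^{r_2}\cdots x_0^{r_n})$, and analogously $A',C$ on the $s$-side, since the collapse of each subtracted sum uses only normalization of the response function within a single fixed context, and the four-variable inequality needs nothing more. With that rephrasing your proof establishes exactly the stated theorem under the intended hypothesis.
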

\end{widetext} 
 The proof of this theorem is given in Appendix \ref{appendix2}. Now, considering the particular cases of tripartite and quadripartite systems respectively, we will show explicitly how the complete set of generalised WLR inequalities for rigorously certifying nonlocality can be obtained from the generalised WLR inequality given by Eq.(\ref{eq:generalised-ineq}).

\subsubsection{Tripartite scenario}
In this case, first, for a specific bipartition $A|BC$, we obtain the following form of generalised WLR inequality satisfying the locality condition across this bipartition by putting $N=3$, $n=1$ in Eq.(\ref{eq:generalised-ineq}) 
\begin{eqnarray}\label{ineq-3-A-BC}
I_{A|BC} &=&
P(x_0^{r_1}+,x_0^{s_1}+,x_0^{s_2}+)-P(x_1^{r_1}+,x_0^{s_1}+,x_0^{s_2}+)\nonumber\\
&& -P(x_1^{r_1}-,x_1^{s_1}+,x_0^{s_2}+)-P(x_0^{r_1}+,x_1^{s_1}+,x_0^{s_2}-) \nonumber\\
&& -P(x_0^{r_1}+,x_1^{s_1}-,x_0^{s_2}+) - P(x_0^{r_1}+,x_1^{s_1}-,x_0^{s_2}-) \leqslant 0 . \nonumber \\
\end{eqnarray}

In the above inequality, by interchanging the measurement settings and the outcomes of measurements on the subsystems $r_1$ and $s_1$, one can find the following form of generalised WLR inequality for the bipartition $B|AC$

\begin{eqnarray}\label{ineq-3-B-AC}
I_{B|AC} &=&
P(x_0^{r_1}+,x_0^{s_1}+,x_0^{s_2}+)-P(x_0^{r_1}+,x_1^{r_1}+,x_0^{s_2}+)\nonumber\\
&& -P(x_1^{r_1}+,x_1^{s_1}-,x_0^{s_2}+)-P(x_1^{r_1}+,x_0^{s_1}+,x_0^{s_2}-) \nonumber\\
&& -P(x_1^{r_1}-,x_0^{s_1}+,x_0^{s_2}+) - P(x_1^{r_1}-,x_0^{s_1}+,x_0^{s_2}-) \leqslant 0  \nonumber\\ 
\end{eqnarray}
Similarly, the generalised WLR inequality for the bipartition $C|AB$ can be obtained from the inequality Eq.(\ref{ineq-3-A-BC}) by interchanging the measurement settings and the outcomes of measurements on the subsystems $r_1$ and $s_2$, given by
\begin{eqnarray}
\label{ineq-3-C-AB}
I_{C|AB} &=&
P(x_0^{r_1}+,x_0^{s_1}+,x_0^{s_2}+)-P(x_0^{r_1}+,x_0^{s_1}+,x_1^{s_2}+)\nonumber\\
&& -P(x_0^{r_1}+,x_1^{s_1}+,x_1^{s_2}-)-P(x_0^{r_1}-,x_1^{s_1}+,x_0^{s_2}+) \nonumber\\
&& -P(x_0^{r_1}+,x_1^{s_1}-,x_0^{s_2}+) - P(x_0^{r_1}-,x_1^{s_1}-,x_0^{s_2}+) \leqslant 0 . \nonumber \\
\end{eqnarray}

%
%
%
%

We recall that in Sec.(\ref{tri_genuine}) we had derived a set of LR inequalities Eqs.(\ref{I1-a-bc})-(\ref{I1-c-ab}) for detecting nonlocality in a tripartite system for all of its different bipartitions. It is to be noted that those inequalities involve combinations of JPDs which are significantly different from the combinations of JPDs occurring in the set of tripartite generalised WLR inequalities given by Eqs.(\ref{ineq-3-A-BC})-(\ref{ineq-3-C-AB}). Hence, these two sets of LR inequalities are not equivalent. We cannot obtain any one inequality belonging to either of the sets of inequalities Eqs.(\ref{I1-a-bc})-(\ref{I1-c-ab}) or Eqs.(\ref{ineq-3-A-BC})-(\ref{ineq-3-C-AB}) from any one inequality of the other set by relabelling the measurement settings and corresponding outcomes. In fact, a particular usefulness of the Wigner approach discussed here lies in providing a flexible framework for obtaining inequivalent sets of WLR inequalities for a given N-partite state, thereby providing an increased choice of the appropriate form of the inequality for detecting nonlocality of the multipartite states across its different bipartitions. We will illustrate this operational advantage in Sec. \ref{sec5}. \\~\\
We now compare the efficacy of the WLR inequalities Eqs.\ref{I1-a-bc})-(\ref{I1-c-ab}) with that of the WLR inequalities Eqs.(\ref{ineq-3-A-BC})-(\ref{ineq-3-C-AB}). As already mentioned in Sec.(\ref{tri_genuine}), the individual quantum violations of the former set of inequalities enable detection of nonlocality for the GHZ and W states across all different bipartitions. In contrast, the individual quantum violations of the latter set of inequalities occur for the W-state, but not for the GHZ state. Here note that the maximum quantum violations of the generalised WLR inequalities Eqs.(\ref{ineq-3-A-BC})-(\ref{ineq-3-C-AB}) for the W-state in all the three bipartitions are found to be the same given by $0.138$, which is slightly greater than the corresponding value $0.101$ for the individual quantum violations of the inequalities Eqs.(\ref{I1-a-bc})-(\ref{I1-c-ab}). We have used the measurement settings given by Eq.(\ref{settings}); the particular values of the relevant measurement parameters specified in Table \ref{table3} have been used to obtain the maximum quantum violations.\\~\\
\begin{table}
\begin{center}
\begin{tabular}{ |p{0.8cm}|p{0.8cm}|p{0.8cm}|p{0.8cm}|p{0.8cm}|p{0.8cm}|p{0.8cm}|p{0.8cm} |}
\hline
\hline
$W$ & $\alpha_0^1$&$\alpha_1^1$&$\alpha_0^2$&$\alpha_1^2$&
$\alpha_0^3$&$\alpha_1^3$  &QV  \\ 
\hline 
$I_{A|BC}$ & $\frac{\pi}{2}$ & $\frac{\pi}{4}$ &2.75&
0.39&0&1.55 &0.138   \\ 
\hline 
$I_{B|AC}$& 0.39&2.75&$\frac{\pi}{2}$&2.35&
$\pi$&0.06&0.138  \\  
\hline 
$I_{C|AB}$ & 0&0.22&2.75&0.39&$\frac{\pi}{2}$&0.78 &0.138   \\ 
\hline
\hline
\end{tabular}
\caption{ Specification of the local measurement parameters
for which the maximum quantum violations (QV) of the generalised WLR inequalities have been obtained for the W-state using Eqs.(\ref{ineq-3-A-BC})-(\ref{ineq-3-C-AB}). }
\label{table3}
\end{center}
\end{table}

\begin{table}
\begin{center}
\begin{tabular}{ |p{0.8cm}|p{0.8cm}|p{0.8cm}|p{0.8cm}|p{0.8cm}|p{0.8cm}|p{0.8cm}|p{0.8cm} |}
\hline
\hline
$\ket{\Psi}$ & $\alpha_0^1$&$\alpha_1^1$&$\alpha_0^2$&$\alpha_1^2$&
$\alpha_0^3$&$\alpha_1^3$ &QV   \\ 
\hline 
$I_{A|BC}$ & $\frac{\pi}{2}$ & $\frac{\pi}{4}$ &1.93&
1.20&0.45&2.40 & 0.15  \\ 
\hline 
$I_{B|AC}$ & 1.93&1.20&$\frac{\pi}{2}$&$\frac{\pi}{4}$&0.45&$\pi$ &0.15 \\  
\hline 
$I_{C|AB}$ &2&2.56&$\frac{\pi}{2}$&0.69&1.60&2.06&0.008    \\ 
\hline
\hline
\end{tabular}
\caption{ Specification of the local measurement parameters
for which the maximum quantum violations (QV) of the generalised WLR inequalities have been obtained for the state $\ket{\Psi}$ using Eqs.(\ref{ineq-3-A-BC})-(\ref{ineq-3-C-AB}). }
\label{table4}
\end{center}
\end{table}

 Next, consider the class of generalised W-state
\begin{equation}\label{genw3}
  \ket{W}_g=\cos\mu\ket{001}+\sin\mu\cos\theta\ket{010}
+\sin\mu\sin\theta\ket{100}  
\end{equation}

where $\mu\in[0,\pi]$ and $\theta\in[0,\pi]$.
A state belonging to a subclass of this state was shown to be useful for information processing tasks \cite{gen_w}.
Here, we consider a particular subclass of the state by substituting $\mu=\frac{\pi}{4}$
\begin{equation}\label{genw}
\ket{W^\prime}=\frac{1}{\sqrt2}(\ket{001}+\cos\theta\ket{010}
+\sin\theta\ket{100})
\end{equation}

\begin{theorem}

For the generalised-W state of the form Eq. (\ref{genw}), quantum mechanics violates the generalised WLR inequalities Eqs.(\ref{ineq-3-A-BC})-(\ref{ineq-3-C-AB}) upon suitable local measurements on each of the subsystems. The maximum quantum violations with respect to $\theta$ $\in[0,\pi]$, are found to be 
\begin{eqnarray}
I^{max}_{A|BC}=I^{max}_{B|AC}=\frac{1}{4}\Big(\sqrt{1+\sin^2 2\theta}-1\Big)\nonumber\\
I^{max}_{C|AB}=\frac{\cos^2\theta}{\Big(1+\cos^2\theta\Big)\Big(1+\sqrt{1+\frac{4\cos^2\theta}{(1+\cos^2\theta)^2}}\Big)} .
\end{eqnarray}

\end{theorem}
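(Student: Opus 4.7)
The plan is to reduce the theorem to an explicit constrained trigonometric optimization and then solve it in closed form. The first step is to write out each of the six joint probabilities entering $I_{A|BC}$, $I_{B|AC}$, and $I_{C|AB}$ as squared inner products of the form $|\langle m^1\, m^2\, m^3 | W' \rangle|^2$, where the measurement basis vectors $\ket{m^j}$ are given by Eq.(\ref{settings}). Because $\ket{W'}$ has real coefficients in the computational basis and the parametrisation in Eq.(\ref{settings}) is real orthogonal, every probability becomes a trigonometric polynomial in the six angles $\alpha^1_0,\alpha^1_1,\alpha^2_0,\alpha^2_1,\alpha^3_0,\alpha^3_1$, with coefficients depending on $\cos\theta$ and $\sin\theta$. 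Substituting these into Eqs.(\ref{ineq-3-A-BC})--(\ref{ineq-3-C-AB}) yields three explicit six-variable functions to be maximized.

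Second, I would exploit the symmetry of $\ket{W'}$ to handle the equality $I^{\max}_{A|BC}=I^{\max}_{B|AC}$. The state is invariant under the exchange of parties $A\leftrightarrow B$ combined with $\theta\to\pi/2-\theta$; together with the manifest invariance of $1+\sin^2 2\theta$ under $\theta\to\pi/2-\theta$, this renders the two maxima equal once either one has been computed. It therefore suffices to carry out the detailed optimization only for $I_{A|BC}$, and separately for $I_{C|AB}$, whose bipartition singles out the party with the $\theta$-independent amplitude $\tfrac{1}{\sqrt 2}$.

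Third, guided by the pattern of numerical values in Table \ref{table4}, I would fix at the boundary values $0$ or $\pi/2$ those angles that couple only to the basis state of the separated party, thereby reducing the optimization to a problem in two or three active angles. In the resulting reduced expression, a substitution such as $t=\tan(\alpha^1_1-\alpha^1_0)$, together with the double-angle identity $2\sin\theta\cos\theta=\sin 2\theta$, should collapse the stationarity conditions into a quadratic equation whose positive root reproduces $\tfrac{1}{4}\bigl(\sqrt{1+\sin^2 2\theta}-1\bigr)$. An analogous but less symmetric calculation for $I_{C|AB}$ yields the more intricate formula $\cos^2\theta / [(1+\cos^2\theta)(1+\sqrt{1+4\cos^2\theta/(1+\cos^2\theta)^2})]$.

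The principal obstacle will be the high-dimensional optimization: the six-parameter landscape possesses several local extrema, and one must argue that the critical point singled out is the global maximum rather than a saddle. I would address this by complementing the analytic derivation with a numerical sweep at representative values of $\theta$ (in particular $\theta=\pi/4$, where $\sin 2\theta=1$ gives the cleanest test), verifying agreement with the closed-form expressions, and by invoking the linearity of each individual probability in the projector along any single measurement direction, which restricts interior maxima enough to justify the reduction to boundary values used above. Once the global optimum is established, the stated expressions for $I^{\max}_{A|BC}$, $I^{\max}_{B|AC}$, and $I^{\max}_{C|AB}$ follow by direct simplification.
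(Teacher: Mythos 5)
Your plan matches the paper's own proof in all essentials: the paper likewise substitutes the settings of Eq.~(\ref{settings}) to turn each $I$ into a trigonometric function of the six angles and $\theta$, imposes the parameter constraints listed in Table~\ref{table} (e.g.\ $\alpha_0^1=\tfrac{\pi}{2}$, $\alpha_0^2+\alpha_1^2=\pi$, $\alpha_0^3=0$ for $I_{A|BC}$) to reduce each expression to two active angles, and then solves the stationarity conditions (yielding $\alpha_1^1=\tfrac{\pi}{4}$ and $\alpha_0^2=\tfrac{1}{2}\tan^{-1}(-\sin 2\theta)$) before plugging back to get the closed forms, treating $I_{B|AC}$ and $I_{C|AB}$ ``in a similar fashion'' rather than via your $A\leftrightarrow B$, $\theta\to\tfrac{\pi}{2}-\theta$ symmetry shortcut. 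The global-optimality concern you flag is not resolved in the paper either—its ansatz for the fixed angles is taken from numerical optimization just as yours would be—so your proposal follows essentially the same route at essentially the same level of rigor.
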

\begin{proof}
In order to achieve maximum violation, we have used the measurement settings given by Eq.(\ref{settings}). Thus, the quantities $I_{A|BC}$, $I_{B|AC}$ and $I_{C|AB}$ are functions of $six$ local parameters $\alpha^j_i$, where $i\in\{0,1\}$ and $j\in\{1,2,3\}$ and the state parameter $\theta$. In Table \ref{table} we have shown the parameter values for which the quantum violations of the three WLR inequalities for the given state are obtained. \\~\\

\begin{table}
\begin{center}
\begin{tabular}{ |p{3.2cm}||p{5.5cm}|  }
\hline
\hline
$I_{A|BC}$ & $\alpha_0^1=\frac{\pi}{2},\hspace{.05 in} \alpha_0^2+\alpha_1^2=\pi \hspace{.05 in}\text{and}\hspace{.05in} \alpha_0^3=0$     \\ 
\hline 
$I_{B|AC}$ & $\alpha_0^1+\alpha_1^1=\pi,\hspace{.05 in} \alpha_0^2=\frac{\pi}{2} \hspace{.05 in}\text{and}\hspace{.05in} \alpha_0^3=0$    \\ 
\hline 
$I_{C|AB}$ & $\alpha_0^1=\frac{\pi}{2},\hspace{.05 in} \alpha_0^2+\alpha_1^2=\pi \hspace{.05 in}\text{and}\hspace{.05in} \alpha_0^3=0$  
 \\ 
\hline 
\hline
\end{tabular}
\caption{ Specification of the local measurement parameters
for which quantum violations of the LR inequalities Eqs.(\ref{ineq-3-A-BC})-(\ref{ineq-3-C-AB}) have been computed. }
\label{table}
\end{center}
\end{table}

After substituting the above mentioned parameter values in the expressions of $I_{A|BC}$, $I_{B|AC}$ and $I_{C|AB}$ respectively, we obtain the following simplified expressions:
\begin{eqnarray}
I_{A|BC}&=&-\frac{1}{4}\Big(2\sin^2\alpha_0^2+\sin2
\alpha_0^2\sin2\alpha_1^1\sin2\theta\Big)\label{maxi1}\\
I_{B|AC}&=&-\frac{1}{4}\Big(2\sin^2\alpha_1^1+\sin2
\alpha_1^1\sin2\alpha_1^2\sin2\theta\Big)\\
I_{C|AB}&=&\frac{1}{4}\Big(-(3+\cos2\theta)\sin^2\alpha_1^2+2\sin2
\alpha_1^2\sin2\alpha_1^3\cos\theta\Big)\nonumber\\
\end{eqnarray}
Let us consider the specific case for $I_{A|BC}$. To maximize $I_{A|BC}$ with respect to the local parameters, we have to solve following equations for $\alpha_0^2$ and $\alpha_1^1$
\begin{eqnarray}
\frac{\partial I_{A|BC}}{\partial \alpha_1^1}&=&0 \hspace{.1 in} {\text{and}}\\
\frac{\partial I_{A|BC}}{\partial \alpha_0^2}&=&0
\end{eqnarray}
It is easy to verify that the conditions are satisfied for $\alpha_1^1=\frac{\pi}{4}$ and $\alpha_0^2=\frac{1}{2}
\tan^{-1}(-\sin2\theta)$ respectively. Now putting these values into Eq.(\ref{maxi1}), we obtain $I^{max}_{A|BC}$. In a similar fashion $I^{max}_{B|AC}$ and $I^{max}_{C|AB}$ can be achieved.

\end{proof}

\begin{figure}[t]
\includegraphics[width=8cm]{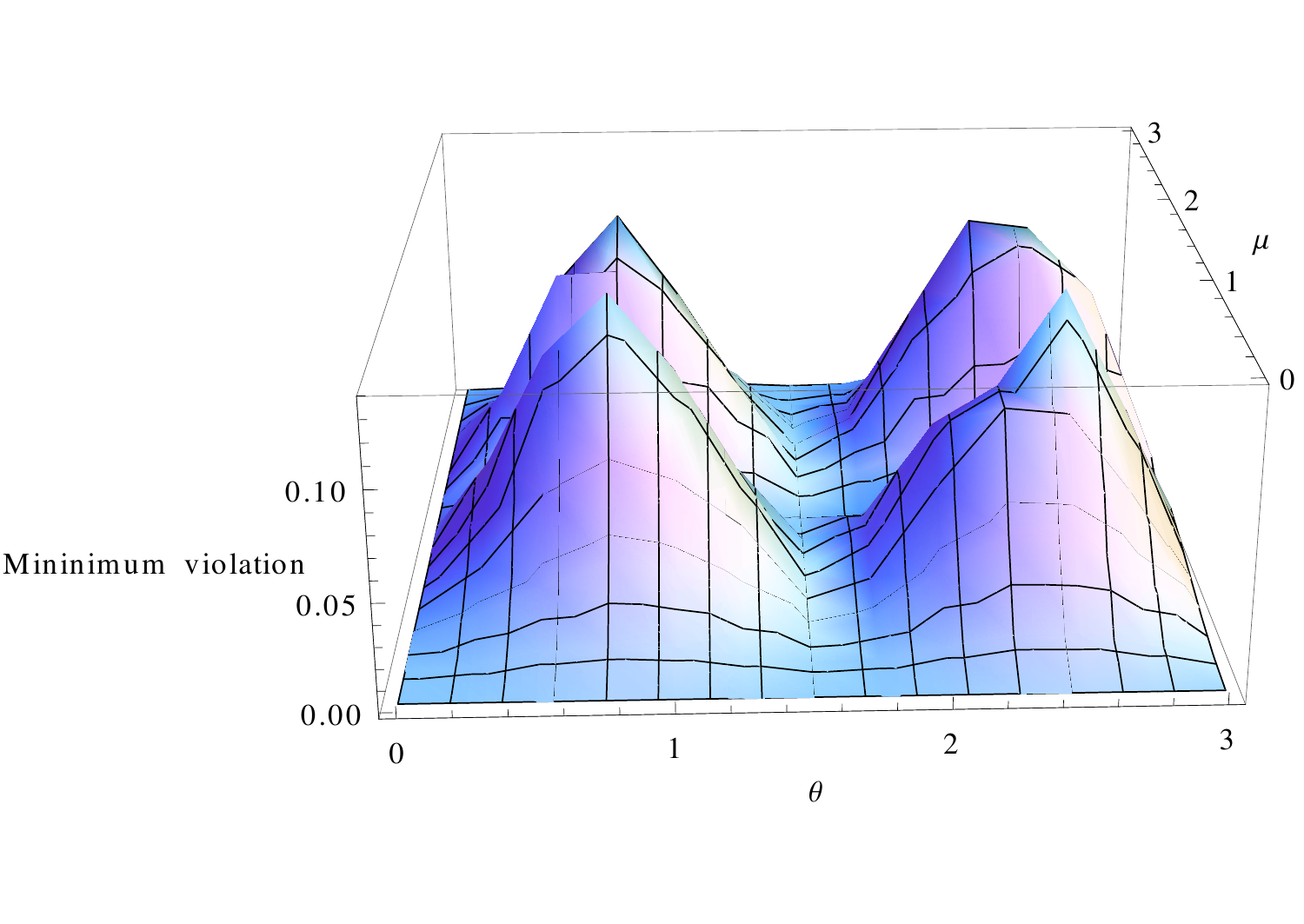}
\centering
\caption{Minima of the quantum violations of the generalised WLR inequalities Eqs.(\ref{ineq-3-A-BC})-(\ref{ineq-3-C-AB}) for the generalised $W$-state given by Eq.(\ref{genw3}) are plotted numerically ($\theta$ and $\mu$ are in radian). The plot shows that no quantum violation can be found for $\theta=\frac{\pi}{2}$ and $\mu=\frac{\pi}{2}$. }
\end{figure}\label{plot1}

The plot in Fig.(\ref{plot1}) shows numerically obtained minima of the quantum violations of the generalised WLR inequalities Eqs.(\ref{ineq-3-A-BC})-(\ref{ineq-3-C-AB}) as functions of the state parameters $\theta$ and $\mu$ of the generalised W state. 
\\~\\

\subsubsection{Quadripartite scenario}
Let us consider a quadripartite system comprising four spatially separated subsystems $A$, $B$, $C$ and $D$. In this case,  bipartitions can occur in different ways. For example, any one of the subsystems can be locally /nonlocally correlated with three other subsystems, \emph{i.e.,} these bipartitions are denoted as $A|BCD$, $B|ACD$, $C|ABD$, and $D|ABC$. The other type of bipartitions involving two subsystems in each group are denoted as $AB|CD$, $AC|BD$ and $AD|BC$. Thus, these seven different bipartitions give rise to seven different generalised WLR inequalities.  \\~\\
Let us first consider the way the generalised WLR inequalities $I_{1|234}$, $I_{2|134}$, $I_{3|124}$ and $I_{4|123}$ are obtained for the bipartitions $A|BCD$, $B|ACD$, $C|ABD$, and $D|ABC$ respectively. 
For instance, for the specific bipartition $A|BCD$, the following generalised WLR inequality satisfying the locality condition across this bipartition is derived by putting $N=4$, $n=1$ in the generalised WLR Eq.(\ref{eq:generalised-ineq})

\begin{widetext}
\begin{eqnarray}\label{ineq-4-A-BCD}
I_{1|234} &=&
P(x_0^{r_1}+,x_0^{s_1}+,x_0^{s_2}+,x_0^{s_3}+)-P(x_1^{r_1}+,x_0^{s_1}+,x_0^{s_2}+,x_0^{s_3}+)\nonumber\\
&& - P(x_1^{r_1}-,x_1^{s_1}+,x_0^{s_2}+,x_0^{s_3}+)-
P(x_0^{r_1}+,x_1^{s_1}+,x_0^{s_2}+,x_0^{s_3}-)\nonumber\\
&& - P(x_0^{r_1}+,x_1^{s_1}+,x_0^{s_2}-,x_0^{s_3}+)- P(x_0^{r_1}+,x_1^{s_1}-,x_0^{s_2}+,x_0^{s_3}+) \nonumber\\
&& - P(x_0^{r_1}+,x_1^{s_1}+,x_0^{s_2}-,x_0^{s_3}-)-
P(x_0^{r_1}+,x_1^{s_1}-,x_0^{s_2}-,x_0^{s_3}+) \nonumber\\
&& - P(x_0^{r_1}+,x_1^{s_1}-,x_0^{s_2}+,x_0^{s_3}-)-
P(x_0^{r_1}+,x_1^{s_1}-,x_0^{s_2}-,x_0^{s_3}-)\nonumber\\
&&\leqslant 0 . \nonumber \\
\end{eqnarray}
\end{widetext}

Similar to the previously considered tripartite case, the desired inequalities $I_{2|134}$, $I_{3|124}$ and $I_{4|123}$ for the other bipartitions $B|ACD$, $C|ABD$ and $D|ABC$ respectively can be obtained by interchanging the measurement settings and outcomes of measurements used in the above inequality Eq.(\ref{ineq-4-A-BCD}).
%

Next, one can readily obtain the required inequalities $I_{12|34}$, $I_{13|24}$ and $I_{14|23}$ for the bipartitions $AB|CD$, $AC|BD$, and $AD|BC$ respectively. For example, $I_{12|34}$ follows from Eq.({\ref{eq:generalised-ineq}) by putting $N=4$, $n=2$, given by
\begin{widetext}
    \begin{eqnarray}\label{ineq-4-AB-CD}
I_{12|34} &=&
P(x_0^{r_1}+,x_0^{r_2}+,x_0^{s_1}+,x_0^{s_2}+)-P(x_1^{r_1}+,x_0^{r_2}+,x_0^{s_1}+,x_0^{s_2}+)\nonumber\\
&& - P(x_1^{r_1}+,x_0^{r_2}-,x_1^{s_1}+,x_0^{s_2}+)-
P(x_1^{r_1}-,x_0^{r_2}+,x_1^{s_1}+,x_0^{s_2}+) \nonumber\\
&& - P(x_1^{r_1}-,x_0^{r_2}-,x_1^{s_1}+,x_0^{s_2}+)- P(x_0^{r_1}+,x_0^{r_2}+,x_1^{s_1}+,x_0^{s_2}-) \nonumber\\
&& - P(x_0^{r_1}+,x_0^{r_2}+,x_1^{s_1}-,x_0^{s_2}+)-
P(x_0^{r_1}+,x_0^{r_2}+,x_1^{s_1}-,x_0^{s_2}-)\nonumber\\
&&\leqslant 0  \nonumber \\
\end{eqnarray}
\end{widetext}

\blk
Similarly, one can obtain the other inequalities $I_{12|34}$, $I_{13|24}$ and $I_{14|23}$ by interchanging the measurement settings and the outcomes of measurements used in the above inequality Eq.(\ref{ineq-4-AB-CD}). In this way,  the entire set consisting of seven generalised WLR inequalities $\{I_{1|234}, I_{2|134}, I_{3|124}, I_{4|123}, I_{12|34}, I_{13|24}, I_{14|23}\}$ can be derived. Individual violations of all such seven inequalities would imply nonlocality of a given quadripartite state across all of its different bipartitions.\\~\\
 Now, let us mention a few salient features of the set of WLR inequalities obtained for different bipartitions of a quadripartite state by considering some specific examples.   
It is to be noted that we will use the measurement settings given by Eq.(\ref{settings}) to obtain the quantum violations in each cases. The generalised $GHZ$ state for a quadripartite system given by $\ket{GHZ}_g=\cos{\theta}\ket{0000}+
 \sin{\theta}\ket{1111}$ is found to violate all the generalised WLR inequalities for $\theta=1.45$ \text{radian}. The $W$-state for a quadripartite system given by $\ket{W}=\frac{1}{2}(\ket{0001}+\ket{0010}+\ket{0100}+\ket{1000})$ violates all the seven generalised WLR inequalities. We illustrate in Fig. (\ref{plot2}) quantum mechanical violation of generalised W state given by 

 \begin{eqnarray}
  \ket{W}_g&=&\cos\theta\ket{0001}+\sin\theta \sin\mu\ket{0010}+\nonumber\\
   && \sin\theta \cos\mu \sin\nu\ket{0100}+\sin\theta \cos\mu \cos\nu\ket{1000}\nonumber\\
 \end{eqnarray}

  where $\theta\in\{0,\pi\}$, $\mu\in\{0,\pi\}$ and $\nu=\frac{\pi}{4}$.
 
\begin{figure}[h]
\includegraphics[width =.4\textwidth]{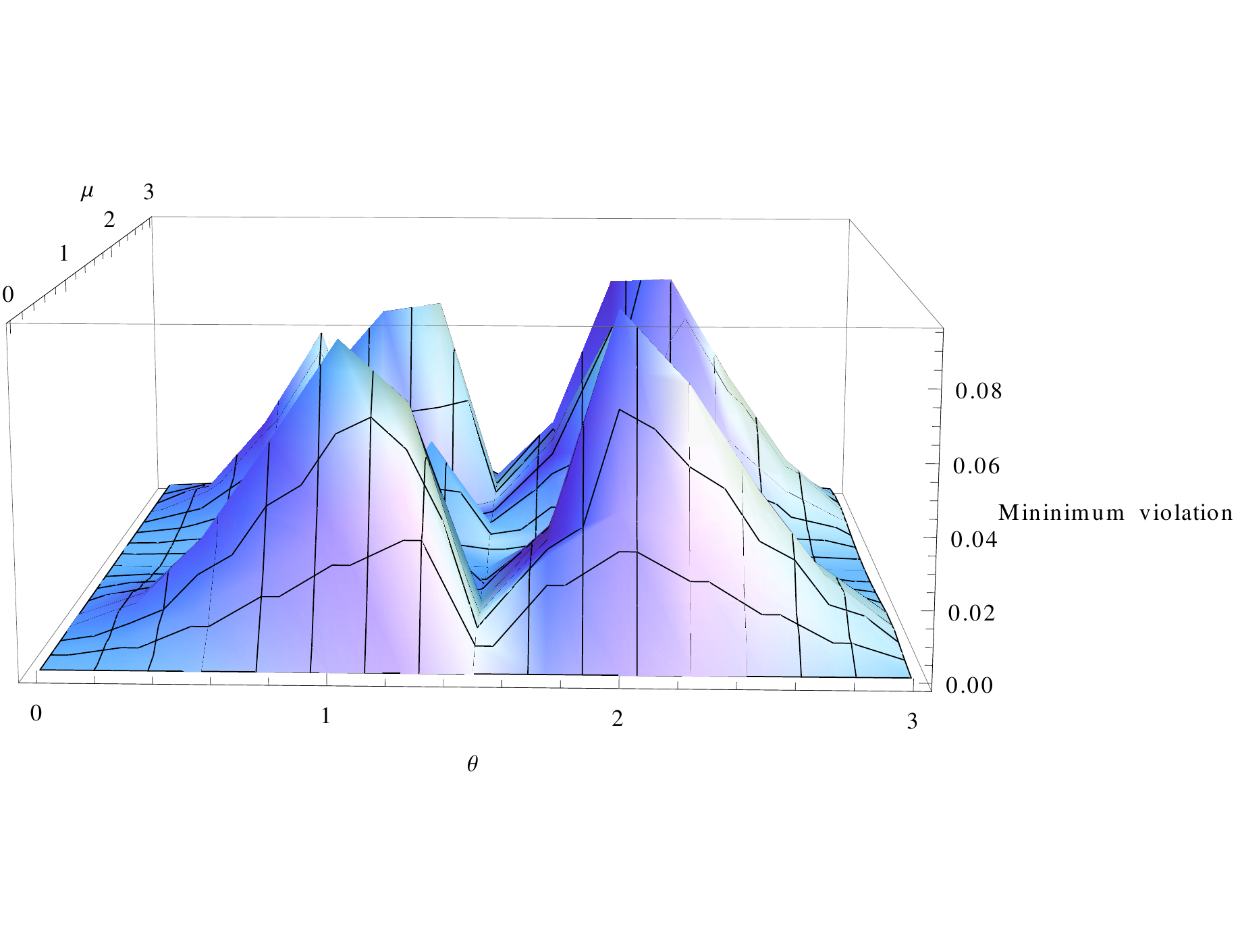}
\caption{Minima of the quantum violations of the seven generalised WLR inequalities given by Eqs.(\ref{ineq-4-A-BCD}), (\ref{ineq-4-AB-CD}) and other similar inequalities obtained for all the different bipartitions are plotted as functions of the two state parameters $\theta$ and $\mu$ measured in radian respectively, by taking the fixed value of the other state parameter $\nu=\frac{\pi}{4}$.}
\end{figure}\label{plot2}
 \section{Efficacy of the generalised WLR inequalities}\label{sec5}

 In this section, some significant consequences of our formulated generalised WLR inequalities are discussed. First, considering the generalised tripartite W state given by Eq.(\ref{genw3}), note that the minima of the quantum mechanical violations of the generalised WLR inequalities Eqs.(\ref{ineq-3-A-BC})-(\ref{ineq-3-C-AB}) have been plotted in Fig.(\ref{plot1}) as functions of the different values of the two state parameters. Here an important point to be noted that for the values of the parameters $\mu=\frac{\pi}{4}$ and $\theta \in\{2.0,3.0\}$, we have not obtained any violation of Svetlichny's inequality after numerical optimization studies for such states. Thus, this exemplifies that there exists certain states whose nonlocality cannot be detected using Svetlichny's inequality, but can be detected using the WLR inequalities given by Eqs.(\ref{ineq-3-A-BC})-(\ref{ineq-3-C-AB}). 
 For further illustration of the efficacy of these inequalities compared to Svetlichny's inequality, let us consider the following tripartite state \cite{bancal}
 
 \begin{equation}\label{ms}
 \ket{\Psi}=
\frac{\sqrt 3}{2}\ket{000}+\frac{\sqrt 3}{4}\ket{110}+\frac{1}{4}\ket{111}    
 \end{equation}

  It has been shown \cite{bancal} that the above state $\ket{\Psi}$ does not violate Svetlichny's inequality, whereas we have found that the generalised WLR inequalities Eqs.(\ref{ineq-3-A-BC})-(\ref{ineq-3-C-AB}) are all quantum mechanically violated for the state $\ket{\Psi}$, with the maximum quantum violations being $0.15$, $0.15$ and $0.008$ for the bipartitions $A|BC$, $B|AC$ and $C|AB$ respectively, see Table \ref{table4}.\\~\\
  Further, we have found that no quantum mechanical violation can be obtained for the generalisation of Svetlichny's inequality \cite{svetlichny2} for the quadripartite $W$-state denoted by $\ket{\mathcal{W}}$ where
\begin{equation}
\ket{\mathcal{W}}= \frac{1}{2}(\ket{0001}+\ket{0010}+\ket{0100}+\ket{1000}) 
\end{equation}
  
On the other hand, our generalised seven WLR inequalities are all individually violated for the $\ket{\mathcal{W}}$ state, thereby certifying its nonlocality. \\~\\
Let us also consider another example of a quadripartite state given by
\begin{equation}\label{1_quadripartite_state}
\ket{\Phi}=\frac{1}{\sqrt{26}}(\ket{0000}+\ket{+000}+
\ket{-+++}+\ket{0111})
\end{equation} 

where the Hadamard basis states $\ket{\pm}$ occurring in the second and third terms are given by $\ket{\pm}=\frac{1}{\sqrt2}(\ket{0}\pm\ket{1})$. Note that, while no quantum violation of the generalised Svetlichny inequality \cite{svetlichny2} can be obtained for the state given by Eq.(\ref{1_quadripartite_state}), the generalised seven quadripartite WLR inequalities given by Eqs.(\ref{ineq-4-A-BCD}), (\ref{ineq-4-AB-CD}) and other similar inequalities obtained for the rest of the bipartitions are all individually violated for this state. It is thus evident that, similar to the tripartite case, there are quadripartite states whose nonlocality cannot be detected by Svetlichny's inequality, but is detectable using the generalised WLR inequalities. \\~\\


Finally, we discuss another significant advantage of our formulated scheme based on appropriate multipartite generalisation of Wigner's approach. Let us consider a $N$-partite state which is not genuinely nonlocal, but may be nonlocal for a specific bipartition with respect to the $m|(N-m)$ cut. Now, in order to investigate the possibility of detecting this nonlocality, the generalised WLR inequality for this particular bipartition is useful whose quantum violation would signify nonlocality in this particular subspace. In order to illustrate this, let us consider the product quadripartite state $\ket{W}\otimes\ket{0}$ ($\ket{W}=\frac{1}{\sqrt3}(\ket{001}+\ket{010}+\ket{100}$) for which the generalised Svetlichny inequality is not violated. But, interestingly, nonlocality of this state for the specific bipartitions $AB|CD$, $AC|BD$ and $AD|BC$ can be detected through quantum violations of the respective generalised WLR inequalities applied to these bipartitions (the maximum quantum violations are found to be $0.44$, $0.44$ and $0.36$ respectively). However, the above considered state does not violate all the seven generalised quadripartite  WLR inequalities given by Eqs.(\ref{ineq-4-A-BCD}), (\ref{ineq-4-AB-CD}) and other similar inequalities for all the different bipartitions. It is thus interesting that although the given state cannot be regarded as genuinely nonlocal, the nonlocality existing in certain specific bipartitions can be detected using our generalised Wigner's approach - a feature that may be useful in applications based on multipartite nonlocality. 
\section{Conclusion}\label{conclusion} 
Our present paper, generalising the approach originally suggested by Wigner, serves to validate the following feature: In a multipartite system, the assumed existence of overall joint probabilities of the relevant observables in a HV theory which yield the marginal probabilities satisfying the locality condition across all different bipartitions is sufficient to demonstrate multipartite nonlocality. A distinctive feature of our generalisation of Wigner's approach is that it enables detecting whether nonlocality is present in a given bipartition of a multipartite state. While illustrative examples of this aspect have been provided in this paper using a suitable form of tripartite and quadripartite states, the possibility of more such examples needs to be investigated. This feature should be useful for experimentally probing finer characterisations of multipartite nonlocality in different subspaces of the entangled state and can provide more flexibility in harnessing multipartite nonlocality for potential applications. Such a line of investigation would, thus, complement the considerable studies that have been made concerning entanglement detection in different subspaces of a multipartite entangled state \cite{ent_sub1,ent_sub2,ent_sub3}.\\~\\

Here we note that, apart from the generalised Svetlichny inequality \cite{svetlichny2} and our generalised Wigner approach, another scheme \cite{bancal2} for studying multipartite nonlocality has been based on the  multipartite multidimensional generalisation of the CGLMP inequality for two qutrit states. The generalised multipartite multidimensional LR inequality thus obtained reduces to Svetlichny's inequality for the triqubit states. Thus, as a follow up to our present paper, a comprehensive comparative study between various aspects of our formulated generalised WLR inequalities and the above mentioned  multipartite multidimensional LR inequality should be worthwhile. Another notable line of studies \cite{bancal4, bancal5, dsaha, caval2} is based on the notion of nonlocality different from that defined by Svetlichny and followed in our work. Along this direction, a number of multipartite LR inequalities have been proposed in \cite{bancal4, dsaha}. In particular, the tripartite case has been extensively studied in \cite{bancal5, caval2}; for example, 185 facet LR inequalities have been formulated by considering the no-signaling bilocal polytope \cite{bancal5}. It should be an interesting direction of future research to comprehensively compare the implications of these inequalities with that of the inequalities obtained by our present approach. In this connection, it may also be mentioned that a class of $46$ LR inequalities had been derived to probe nonlocality in the tripartite case \cite{sliwa, sliwa1}. However, the applications of all these inequalities have been essentially restricted to the GHZ type states. 
%



Next, considering the applicational aspect, we note that the nonlocality stemming from the multipartite states has already been recognised as a potential resource for performing information processing protocols. For example, the resource theoretic aspects of multipartite nonlocality have been analysed in \cite{gisin3}. A possible application in devising a quantum key distribution protocol involving multipartite states, commonly known as conference key agreement, has been pointed out \cite{mike}, while its fully device independent treatment has also been developed \cite{wehner} using Mermin's multipartite 
 LR inequality. However, as we had mentioned in (Sec. \ref{intro}), the violation of Mermin's inequality does not certify genuine nonlocality. Therefore, for such applications, it would be interesting to probe the usefulness of our formulated generalised WLR inequalities. Here it is relevant to note an earlier study \cite{bruss} showing that the efficacy of the conference key agreement protocol can be enhanced by extracting higher key rate using multipartite entanglement. Thus,  in this context, the role of multipartite nonlocality should be instructive to analyze by using the generalised WLR inequalities. \\~\\


Finally, we would like to mention a possible future line of research concerning the quantitative relationship between entanglement and nonlocality in a multipartite scenario. While a number of studies have revealed interesting facets of the quantitative incommensurability between entanglement and nonlocality for a bipartite system \cite{brun,acin2,jphys,global2}, this issue has remained largely unexplored for a multipartite system. It should therefore be worthwhile to thoroughly investigate the commensurability between the maximum violations of the generalised WLR inequalities and the various measures of genuine multipartite entanglement such as the triangle measure \cite{triangle}, and the global measure of entanglement of a given multipartite state \cite{global1}. 


\section*{Acknowledgement}
SN acknowledges support from the Department of Science and Technology, Government of India
through the QuEST grant DST/ICPS/QuEST/2018/98. DS acknowledges support from National Post Doctoral Fellowship (PDF/2020/001682). DH acknowledges support from the NASI Senior
Scientist fellowship. ASM acknowledges support from the project no. DST/ICPS/QuEST/2018/98 of the Department of Science and Technology, Government of India. We also thank Siddhartha Das for his comments on the manuscript.

\bigskip 

\onecolumngrid

\appendix 

\section{Proof of Theorem \ref{thm1}}\label{appendix1}
Here we proceed to provide the proof of Theorem (1) by first considering the following relevant joint probability distributions in the hidden variable (HV) space:

\begin{eqnarray}\label{a2}
 P_\lambda(x^A_1+,x^B_0+,x^C_0+)&=&\sum_{x^B_0x^C_1,
x^B_1x^C_0,x^B_1x^C_1} P_\lambda (x_0^A+,x_1^A+,x^B_0x^C_0=++,x^B_0x^C_1,x^B_1x^C_0,x^B_1x^C_1)  \nonumber  \\
&& + \underbrace{\sum_{x^B_0x^C_1,
x^B_1x^C_0,x^B_1x^C_1} P_\lambda (x_0^A-,x_1^A+,x^B_0x^C_0=++,x^B_0x^C_1,x^B_1x^C_0,x^B_1x^C_1)}_{\delta_1}\nonumber\\
&=&\sum_{x^B_0x^C_1,
x^B_1x^C_0,x^B_1x^C_1} P_\lambda (x_0^A+,x_1^A+,x^B_0x^C_0=++,x^B_0x^C_1,x^B_1x^C_0,x^B_1x^C_1)+ \ \delta_1\nonumber\\
\end{eqnarray}
where $\delta_1$ is sum of the JPDs, and hence, $\delta_1\ge 0$. Similarly we write,

\begin{eqnarray}
&& P_\lambda(x^A_1-,x^B_1+,x^C_0+)  \nonumber  \\
 &=&\sum_{x^B_0x^C_0,
x^B_0x^C_1,x^B_1x^C_1} P_\lambda (x_0^A+,x_1^A-,x^B_0x^C_0,x^B_0x^C_1,x^B_1x^C_0=++,x^B_1x^C_1) +\nonumber\\&& \sum_{x^B_0x^C_0,
x^B_0x^C_1,x^B_1x^C_1} P_\lambda (x_0^A-,x_1^A-,x^B_0x^C_0,x^B_0x^C_1,x^B_1x^C_0=++,x^B_1x^C_1)\nonumber\\
&=& \sum_{
x^B_0x^C_1,x^B_1x^C_1} P_\lambda (x_0^A+,x_1^A-,x^B_0x^C_0=++,x^B_0x^C_1,x^B_1x^C_0=++
,x^B_1x^C_1) \nonumber\\ 
&&+ \underbrace{\sum_{x_0^Bx_0^C\neq\{++\},
x^B_0x^C_1,x^B_1x^C_1} P_\lambda (+,-,x_0^Bx_0^C,x^B_0x^C_1,++,x^B_1x^C_1)  +\sum_{x^B_0x^C_0,
x^B_0x^C_1,x^B_1x^C_1} P_\lambda (-,-,x^B_0x^C_0,x^B_0x^C_1,++,x^B_1x^C_1)}_{\delta_2}\nonumber\\
&=& \sum_{
x^B_0x^C_1,x^B_1x^C_1} P_\lambda (x_0^A+,x_1^A-,x^B_0x^C_0=++,x^B_0x^C_1,
x^B_1x^C_0=++,x^B_1x^C_1)+\ \delta_2\nonumber\\
\end{eqnarray}
where we have used the shorthand notation $P_\lambda (+,-,x_0^Bx_0^C,x^B_0x^C_1,++,x^B_1x^C_1)$ to denote 
$P_\lambda (x_0^A+,x_1^A-,x_0^Bx_0^C,x^B_0x^C_1,x_1^Bx_0^C=++
,x^B_1x^C_1)$ and so on. 

\begin{eqnarray}
&& P_\lambda(x^A_0+,x^B_1-,x^C_0+) \nonumber  \\
&=& \sum_{x^B_0x^C_0,
x^B_0x^C_1,x^B_1x^C_1} P_\lambda (x_0^A+,x_1^A+,x^B_0x^C_0,x^B_0x^C_1,
x_1^Bx_0^C=-+,x^B_1x^C_1)\nonumber\\&&
+ \sum_{x^B_0x^C_0,
x^B_0x^C_1,x^B_1x^C_1} P_\lambda (x_0^A+,x_1^A-,x^B_0x^C_0,x^B_0x^C_1,
x_1^Bx_0^C=-+,x^B_1x^C_1)\nonumber\\
&=& \sum_{x_0^Bx_0^C,x^B_0x^C_1,x^B_1x^C_1} P_\lambda (x_0^A+,x_1^A+,x^B_0x^C_0,x^B_0x^C_1,
x_1^Bx_0^C=-+,x^B_1x^C_1) \nonumber\\ 
&& + \underbrace{\sum_{x^B_0x^C_1,x^B_1x^C_1}P_\lambda (+,-,++,x^B_0x^C_1,-+,x^B_1x^C_1) +
\sum_{
x_0^Bx_0^C\neq\{++\},x^B_0x^C_1,x^B_1x^C_1} P_\lambda (+,-,x_0^Bx_0^C,x^B_0x^C_1,-+,x^B_1x^C_1)}_{\delta_3} \nonumber\\
&=& \sum_{x^B_0x^C_1,x^B_1x^C_1}P_\lambda (x_0^A+,x_1^A-,x_0^Bx_0^C=++,x^B_0x^C_1,
x_1^Bx_0^C=-+,x^B_1x^C_1)+\ \delta_3
\end{eqnarray}
\begin{eqnarray}
&& P_\lambda(x^A_1-,x^B_1+,x^C_0-) \nonumber  \\
&=& \sum_{x^B_0x^C_0,
x^B_0x^C_1,x^B_1x^C_1} P_\lambda (x_0^A+,x_1^A-,x^B_0x^C_0,x^B_0x^C_1,
x_1^Bx_0^C=+-,x^B_1x^C_1) \nonumber\\
&&+ \sum_{x^B_0x^C_0,
x^B_0x^C_1,x^B_1x^C_1} P_\lambda (x_0^A-,x_1^A-,x^B_0x^C_0,x^B_0x^C_1,
x_1^Bx_0^C=+-,x^B_1x^C_1)\nonumber\\
&=& \sum_{
x^B_0x^C_1,x^B_1x^C_1} P_\lambda (x_0^A+,x_1^A-,x_0^Bx_0^C=++,x^B_0x^C_1,
x_1^Bx_0^C=+-,x^B_1x^C_1) \nonumber\\
&& + \underbrace{\sum_{x_0^Bx_0^C\neq\{++\},
x^B_0x^C_1,x^B_1x^C_1} P_\lambda (+,-,x_0^Bx_0^C,x^B_0x^C_1,+-,x^B_1x^C_1)  + \sum_{x^B_0x^C_0,
x^B_0x^C_1,x^B_1x^C_1} P_\lambda (-,-,x^B_0x^C_0,x^B_0x^C_1,+-,x^B_1x^C_1)}_{\delta_4} \nonumber\\
&=& \sum_{
x^B_0x^C_1,x^B_1x^C_1} P_\lambda (x_0^A+,x_1^A-,x_0^Bx_0^c=++,x^B_0x^C_1,
x_1^Bx_0^C=+-,x^B_1x^C_1)+\ \delta_4
\end{eqnarray}
\begin{eqnarray}\label{a6}
&& P_\lambda(x^A_0+,x^B_1-,x^C_0-) \nonumber  \\
&=&\sum_{x^B_0x^C_0,
x^B_0x^C_1,x^B_1x^C_1} P_\lambda (x_0^A+,x_1^A+,x^B_0x^C_0,x^B_0x^C_1,
x_1^Bx_0^C=--,x^B_1x^C_1)\nonumber\\&& 
+ \sum_{x^B_0x^C_0,
x^B_0x^C_1,x^B_1x^C_1} P_\lambda (x_0^A+,x_1^A-,x^B_0x^C_0,x^B_0x^C_1,
x_1^Bx_0^C=--,x^B_1x^C_1)\nonumber\\
&=& \sum_{
x^B_0x^C_1,x^B_1x^C_1} P_\lambda (x_0^A+,x_1^A-,x_0^Bx_0^C=++,x^B_0x^C_1,
x_1^Bx_0^C=--,x^B_1x^C_1) \nonumber  \\
&& + \underbrace{\sum_{x^B_0x^C_0,
x^B_0x^C_1,x^B_1x^C_1} P_\lambda (+,+,x^B_0x^C_0,x^B_0x^C_1,--,x^B_1x^C_1) +\sum_{x_0^Bx_0^C\neq\{++\},
x^B_0x^C_1,x^B_1x^C_1} P_\lambda (+,-,x_0^Bx_0^C,x^B_0x^C_1,--,x^B_1x^C_1)}_{\delta_5} \nonumber\\ 
&=& \sum_{
x^B_0x^C_1,x^B_1x^C_1} P_\lambda (x_0^A+,x_1^A-,x_0^Bx_0^C=++,x^B_0x^C_1,
x_1^Bx_0^C=--,x^B_1x^C_1)+\ \delta_5
\end{eqnarray}
In the above expressions $\delta_i$s are sums of JPDs, and hence, positive. Finally, we write
\begin{eqnarray}\label{jpd4} 
 P_\lambda(x^A_0+,x^B_0+,x^C_0+)=\sum_{x^B_0x^C_1,
x^B_1x^C_0,x^B_1x^C_1} P_\lambda (x_0^A+,x_1^A+,x^B_0x^C_0=++,x^B_0x^C_1,x^B_1x^C_0,
x^B_1x^C_1) \nonumber  \\
+ \sum_{x^B_0x^C_1,x^B_1x^C_1} P_\lambda (x_0^A+,x_1^A-,x^B_0x^C_0=++,x^B_0x^C_1,x^B_1x^C_0=++,
x^B_1x^C_1) \nonumber  \\
+ \sum_{x^B_0x^C_1,x^B_1x^C_1} P_\lambda (x_0^A+,x_1^A-,x^B_0x^C_0=++,x^B_0x^C_1,x^B_1x^C_0=-+,
x^B_1x^C_1) \nonumber  \\
+ \sum_{x^B_0x^C_1,x^B_1x^C_1} P_\lambda (x_0^A+,x_1^A-,x^B_0x^C_0=++,x^B_0x^C_1,x^B_1x^C_0=+-,
x^B_1x^C_1) \nonumber  \\
+ \sum_{x^B_0x^C_1,x^B_1x^C_1} P_\lambda (x_0^A+,x_1^A-,x^B_0x^C_0=++,x^B_0x^C_1,x^B_1x^C_0=--,
x^B_1x^C_1) \nonumber\\ 
\end{eqnarray}

 Now adding Eqs. (\ref{a2})-(\ref{a6}) and further using the expansion of $P_\lambda(x^A_0+,x^B_0+,x^C_0+)$ in \eqref{jpd4}, we obtain
\begin{eqnarray}
&& P_\lambda(x^A_1+,x^B_0+,x^C_0+) +P_\lambda(x^A_1-,x^B_1+,x^C_0+)+P_\lambda(x^A_0+,x^B_1-,x^C_0+) +P_\lambda(x^A_1-,x^B_1+,x^C_0-) + P_\lambda(x^A_0+,x^B_1-,x^C_0-)\nonumber\\
&=& \sum_{x^B_0x^C_1,
x^B_1x^C_0,x^B_1x^C_1} P_\lambda (+,+,++,x^B_0x^C_1,x^B_1x^C_0,x^B_1x^C_1) 
+ \sum_{x^B_0x^C_1,x^B_1x^C_1} P_\lambda (+,-,++,x^B_0x^C_1,++,x^B_1x^C_1) \nonumber  \\
&& + \sum_{x^B_0x^C_1,x^B_1x^C_1} P_\lambda (+,-,++,x^B_0x^C_1,-+,x^B_1x^C_1) 
+ \sum_{x^B_0x^C_1,x^B_1x^C_1} P_\lambda (+,-,++,x^B_0x^C_1,+-,x^B_1x^C_1) \nonumber  \\
&& + \sum_{x^B_0x^C_1,x^B_1x^C_1} P_\lambda (+,-,++,x^B_0x^C_1,--,x^B_1x^C_1)+\sum_{l=1}^5\delta_l\nonumber\\
&=& P_\lambda(x^A_0+,x^B_0+,x^C_0+) +\sum_l\delta_l .
\end{eqnarray}

Now $\sum_{l=1}^5\delta_l$ is necessarily a positive quantity. Invoking this fact, we finally obtain,
\begin{eqnarray}
P_\lambda(x^A_1+,x^B_0+,x^C_0+) +P_\lambda(x^A_1-,x^B_1+,x^C_0+)+P_\lambda(x^A_0+,x^B_1-,x^C_0+) +P_\lambda(x^A_1-,x^B_1+,x^C_0-) + P_\lambda(x^A_0+,x^B_1-,x^C_0-)\nonumber\\
\ge P_\lambda(x^A_0+,x^B_0+,x^C_0+) \nonumber\\
\end{eqnarray}  
We rewrite it as 
\begin{eqnarray}\label{eqa9}
P_\lambda(x^A_0+,x^B_0+,x^C_0+)-P_\lambda(x^A_1+,x^B_0+,x^C_0+) -P_\lambda(x^A_1-,x^B_1+,x^C_0+)-P_\lambda(x^A_0+,x^B_1-,x^C_0+) -P_\lambda(x^A_1-,x^B_1+,x^C_0-) -\nonumber\\ P_\lambda(x^A_0+,x^B_1-,x^C_0-)
\le0\nonumber\\
\end{eqnarray}
Subsequently, integrating Eq.(\ref{eqa9}) over the HV distribution, we readily obtain the LR inequality given by Eq.(\ref{I1-a-bc}) in the main text. One can also obtain the LR inequalities for the bipartitions $B|AC$ and $C|AB$ Eq.(\ref{I1-b-ac}) and Eq.(\ref{I1-c-ab}) by simply
interchanging the measurement settings and the outcomes of measurements on the subsystems $A$, $B$ and $A$, $C$ respectively, occurring in the inequality given by Eq.(\ref{I1-a-bc}).

\section{Proof of Theorem \ref{thm2}}\label{appendix2}
We can write the JPD $ P_{\lambda}(x_0^{r_1}+,x_0^{r_2}+, \dots, x_0^{r_n}+, x_0^{s_1}+,x_0^{s_2}+, \dots, x_0^{s_{N-n}}+)$ as follows:
\bea\label{jpd5}
&&  P_{\lambda}(x_0^{r_1}+,x_0^{r_2}+, \dots, x_0^{r_n}+, x_0^{s_1}+,x_0^{s_2}+, \dots, x_0^{s_{N-n}}+) \nonumber \\
&=& \sum \ P_\lambda(++\cdots +,  x_1^{r_1}x_0^{r_2}\cdots x_0^{r_n}, \dots, x_1^{r_1}x_1^{r_2}\cdots x_1^{r_n}, ++\cdots +, x_1^{s_1}x_0^{s_2}\cdots x_0^{s_{N-n}}, \dots, x_1^{s_1}x_1^{s_2}\cdots x_1^{s_{N-n}}) \nonumber \\
\end{eqnarray}
where the summation is taken over all possible combinations of outcomes of the relevant observables appearing in the different joint probability distributions occurring on the RHS of the above equation. For convenience, we introduce the following short hand notation 
\be 
\Vec{x}_k = x_{k_1}^{r_1}x_{k_2}^{r_2}\cdots x_{k_n}^{r_n} , \quad \Vec{y}_t = x_{t_1}^{s_1}x_{t_2}^{s_2}\cdots x_{t_n}^{s_{N-n}} 
\ee 
where $k$ is the decimal form of binary string $(k_n\cdots k_2k_1)$ and $t$ is the decimal form of binary string $(t_{N-n}\cdots t_2t_1)$. Thus, $k$ takes value from 0 to $2^n-1$, and $t$ takes value from 0 to $2^{N-n}-1$. For instance, $\Vec{x}_0 = x_{0}^{r_1}x_{0}^{r_2}\cdots x_{0}^{r_n}$, $\Vec{y}_1 = x_{1}^{s_1}x_{0}^{s_2}\cdots x_{0}^{s_{N-n}}$. We further denote $|r|= 2^n-1$, and $|s| = 2^{N-n}-1$.
Expanding the following JPDs using this notation, we obtain

\bea \label{GIt2}
&& P_\lambda(x_1^{r_1}+,x_0^{r_2}+, \dots, x_0^{r_n}+, x_0^{s_1}+,x_0^{s_2}+, \dots, x_0^{s_{N-n}}+)  \nonumber \\
&=& \sum_{\Vec{y}_1, \dots, \Vec{y}_{|s|}} \sum_{\Vec{x}_0, \Vec{x}_2,\dots, \Vec{x}_{|r|}} \ P_\lambda \left(\Vec{x}_0 , \  \Vec{x}_1= ++\cdots +, \ \Vec{x}_2 , \ \dots, \ \Vec{x}_{|r|},\ \Vec{y}_0 = ++\cdots +, \ \Vec{y}_1, \ \dots, \ \Vec{y}_{|s|} \right) \nonumber \\ 
&=& \sum_{\Vec{y}_1, \dots, \Vec{y}_{|s|}} \sum_{\Vec{x}_2,\dots, \Vec{x}_{|r|}} \ P_\lambda \left(\Vec{x}_0 = ++\cdots +,\ \Vec{x}_1 = ++\cdots +,\ \Vec{x}_2,\ \dots, \ \Vec{x}_{|r|},\ \Vec{y}_0 = ++\cdots +, \ \Vec{y}_1, \ \dots, \ \Vec{y}_{|s|} \right) + \nonumber\\
&&\sum_{\Vec{y}_1,\Vec{y}_2, \dots, \Vec{y}_{|s|}} \sum_{\substack{\Vec{x}_0,\Vec{x}_2\dots, \Vec{x}_{|r|} \\ \Vec{x}_0 \neq (++\cdots+)}} \ P_\lambda \left(\Vec{x}_0,\ \Vec{x}_1 = ++\cdots +,\ \Vec{x}_2,\ \dots, \ \Vec{x}_{|r|},\ \Vec{y}_0 = ++\cdots +, \ \Vec{y}_1, \ \dots, \ \Vec{y}_{|s|} \right) 
 \nonumber \\
&=&\sum_{\Vec{y}_1, \dots, \Vec{y}_{|s|}} \sum_{\Vec{x}_2,\dots, \Vec{x}_{|r|}} \ P_\lambda \left(\Vec{x}_0 = ++\cdots +,\ \Vec{x}_1 = ++\cdots +,\ \Vec{x}_2,\ \dots, \ \Vec{x}_{|r|},\ \Vec{y}_0 = ++\cdots +, \ \Vec{y}_1, \ \dots, \ \Vec{y}_{|s|} \right)+\ \tilde{\delta}_1
\eea 

\bea \label{GIt3}
&& \sum_{\substack{x_1^{r_1},x_0^{r_2}, \dots, x_0^{r_n}\\ \neq (+,+,\dots,+)}} P_\lambda (x_1^{r_1},x_0^{r_2}, \dots, x_0^{r_n}, x_1^{s_1}+,x_0^{s_2}+, \dots, x_0^{s_{N-n}}+) \nonumber \\
&=& \sum_{\Vec{y}_0, \Vec{y}_2, \dots, \Vec{y}_{|s|}} \sum_{\substack{\Vec{x}_0,\Vec{x}_1,\dots, \Vec{x}_{|r|} \\ \Vec{x}_1 \neq (++\cdots+) }} P_\lambda \left(\Vec{x}_0 ,\ \Vec{x}_1,\ \Vec{x}_2,\ \dots, \ \Vec{x}_{|r|},\ \Vec{y}_0, \ \Vec{y}_1 = ++\cdots+, \ \Vec{y}_2 , \ \dots, \ \Vec{y}_{|s|} \right)  \nonumber \\
&=& \sum_{\Vec{y}_2, \dots, \Vec{y}_{|s|}} \sum_{\substack{\Vec{x}_1,\dots, \Vec{x}_{|r|} \\ \Vec{x}_1 \neq (+\cdots+) }} P_\lambda \left(\Vec{x}_0 = +\cdots +,\ \Vec{x}_1,\ \Vec{x}_2,\ \dots, \ \Vec{x}_{|r|},\ \Vec{y}_0 = +\cdots +, \ \Vec{y}_1 = +\cdots+, \ \Vec{y}_2 , \ \dots, \ \Vec{y}_{|s|} \right) + \ \tilde{\delta}_2 \nonumber \\
\eea 

\bea \label{GIt4}
&& \sum_{\substack{x_1^{s_1},x_0^{s_2}, \dots, x_0^{s_{N-n}}\\ \neq (+,+,\dots,+)}} P_\lambda (x_0^{r_1}+,x_0^{r_2}+, \dots, x_0^{r_n}+, x_1^{s_1},x_0^{s_2}, \dots, x_0^{s_{N-n}})  \nonumber \\ 
&=& \sum_{\substack{\Vec{y}_0,\Vec{y}_1, \dots, \Vec{y}_{|s|} \\ \Vec{y}_1 \neq (++\cdots+)} } \ \sum_{\Vec{x}_1,\dots, \Vec{x}_{|r|} } P_\lambda \left(\Vec{x}_0 = ++\cdots +,\ \Vec{x}_1,\ \Vec{x}_2,\ \dots, \ \Vec{x}_{|r|},\ \Vec{y}_0 = ++\cdots +, \ \Vec{y}_1, \ \Vec{y}_2 , \ \dots, \ \Vec{y}_{|s|} \right) \nonumber \\
&=& \sum_{\substack{\Vec{y}_1, \dots, \Vec{y}_{|s|} \\ \Vec{y}_1 \neq (++\cdots+)} } \ \sum_{\substack{\Vec{x}_1,\dots, \Vec{x}_{|r|} \\ \Vec{x}_1 \neq (++\cdots+) }} P_\lambda \left(\Vec{x}_0 = ++\cdots +,\ \Vec{x}_1,\ \Vec{x}_2,\ \dots, \ \Vec{x}_{|r|},\ \Vec{y}_0 = ++\cdots +, \ \Vec{y}_1, \ \Vec{y}_2 , \ \dots, \ \Vec{y}_{|s|} \right) + \ \tilde{\delta}_3 \nonumber \\
\eea 
In the above expressions $\tilde{\delta}_i$s are the sums of joint probability distributions, and hence, positive. Finally, we obtain 
\begin{eqnarray}\label{GIt1}
&& P_{\lambda}(x_0^{r_1}+,x_0^{r_2}+, \dots, x_0^{r_n}+, x_0^{s_1}+,x_0^{s_2}+, \dots, x_0^{s_{N-n}}+) \nonumber \\
&=& \sum_{\Vec{y}_1, \dots, \Vec{y}_{|s|}} \sum_{\Vec{x}_1,\dots, \Vec{x}_{|r|}} \ P_\lambda \left(\Vec{x}_0 = ++\cdots +, \  \Vec{x}_1, \ \dots, \ \Vec{x}_{|r|},\ \Vec{y}_0 = ++\cdots +, \ \Vec{y}_1, \ \dots, \ \Vec{y}_{|s|} \right) \nonumber \\
& = & \sum_{\Vec{y}_1, \dots, \Vec{y}_{|s|}} \sum_{\Vec{x}_2,\dots, \Vec{x}_{|r|}} \ P_\lambda \left(\Vec{x}_0 = ++\cdots +,\ \Vec{x}_1 = ++\cdots +,\ \Vec{x}_2,\ \dots, \ \Vec{x}_{|r|},\ \Vec{y}_0 = ++\cdots +, \ \Vec{y}_1, \ \dots, \ \Vec{y}_{|s|} \right) \nonumber \\
 && + \sum_{\Vec{y}_2, \dots, \Vec{y}_{|s|}} \sum_{\substack{\Vec{x}_1,\dots, \Vec{x}_{|r|} \\ \Vec{x}_1 \neq (++\cdots+) }} P_\lambda \left(\Vec{x}_0 = ++\cdots +,\ \Vec{x}_1,\ \Vec{x}_2,\ \dots, \ \Vec{x}_{|r|},\ \Vec{y}_0 = ++\cdots +, \ \Vec{y}_1 = ++\cdots+, \ \Vec{y}_2 , \ \dots, \ \Vec{y}_{|s|} \right) \nonumber \\ 
  && + \sum_{\substack{\Vec{y}_1, \dots, \Vec{y}_{|s|} \\ \Vec{y}_1 \neq (++\cdots+)} } \ \sum_{\substack{\Vec{x}_1,\dots, \Vec{x}_{|r|} \\ \Vec{x}_1 \neq (++\cdots+) }} P_\lambda \left(\Vec{x}_0 = ++\cdots +,\ \Vec{x}_1,\ \Vec{x}_2,\ \dots, \ \Vec{x}_{|r|},\ \Vec{y}_0 = ++\cdots +, \ \Vec{y}_1, \ \Vec{y}_2 , \ \dots, \ \Vec{y}_{|s|} \right) , 
\eea
where the summation $\Vec{x}_k,\Vec{y}_t$ is taken over all possible outcomes. In the second step of the above equation, we expand the summation over $\Vec{x}_1,\Vec{y}_1$ into three different terms. 
Now, adding Eqs.\eqref{GIt2}-\eqref{GIt4} and using the expansion of $P_\lambda(x_0^{r_1}+,x_0^{r_2}+, \dots, x_0^{r_n}+, x_0^{s_1}+,x_0^{s_2}+, \dots, x_0^{s_{N-n}}+)$ given in \eqref{GIt1}, we find that
\bea
&& P_\lambda(x_1^{r_1}+,x_0^{r_2}+, \dots, x_0^{r_n}+, x_0^{s_1}+,x_0^{s_2}+, \dots, x_0^{s_{N-n}}+) \nonumber \\
&& + \sum_{\substack{x_1^{r_1},x_0^{r_2} \dots, x_0^{r_n}\\ \neq (+,\dots,+)}} P_\lambda(x_1^{r_1},x_0^{r_2}, \dots, x_0^{r_n}, x_1^{s_1}+,x_0^{s_2}+, \dots, x_0^{s_{N-n}}+)  + \sum_{\substack{x_1^{s_1},x_0^{s_2} \dots, x_0^{s_{N-n}}\\ \neq (+,\dots,+)}} P_\lambda(x_0^{r_1}+,x_0^{r_2}+, \dots, x_0^{r_n}+, x_1^{s_1},x_0^{s_2}, \dots, x_0^{s_{N-n}}) \nonumber \\
&=& P_\lambda(x_0^{r_1}+,x_0^{r_2}+, \dots, x_0^{r_n}+, x_0^{s_1}+,x_0^{s_2}+, \dots, x_0^{s_{N-n}}+)+ \tilde{\delta}_1 + \tilde{\delta}_2 + \tilde{\delta}_3
\eea 
As like before, $\tilde{\delta}_1 + \tilde{\delta}_2 + \tilde{\delta}_3 \ge0$. Thus we obtain,
\begin{eqnarray}
&& P_\lambda(x_0^{r_1}+,x_0^{r_2}+, \dots, x_0^{r_n}+, x_0^{s_1}+,x_0^{s_2}+, \dots, x_0^{s_{N-n}}+)- P_\lambda(x_1^{r_1}+,x_0^{r_2}+, \dots, x_0^{r_n}+, x_0^{s_1}+,x_0^{s_2}+, \dots, x_0^{s_{N-n}}+) \nonumber \\
&& - \sum_{\substack{x_1^{r_1},x_0^{r_2} \dots, x_0^{r_n}\\ \neq (+,\dots,+)}} P_\lambda(x_1^{r_1},x_0^{r_2}, \dots, x_0^{r_n}, x_1^{s_1}+,x_0^{s_2}+, \dots, x_0^{s_{N-n}}+) \nonumber \\
&& - \sum_{\substack{x_1^{s_1},x_0^{s_2} \dots, x_0^{s_{N-n}}\\ \neq (+,\dots,+)}} P_\lambda(x_0^{r_1}+,x_0^{r_2}+, \dots, x_0^{r_n}+, x_1^{s_1},x_0^{s_2}, \dots, x_0^{s_{N-n}})\le0\nonumber\\
\end{eqnarray}
Finally, integrating the above equation over the HV distribution, we obtain the inequality given by Eq.(\ref{eq:generalised-ineq}).

\end{document}